\documentclass{amsart}
\newtheorem{theorem}{Theorem}[section]

\theoremstyle{definition}

\newtheorem{definition}{Definition}
\newtheorem{corollary}{Corollary}

\newtheorem{example}{Example}
\theoremstyle{remark}
\newtheorem{remark}[theorem]{Remark}
\usepackage{ragged2e}
\usepackage{amsmath}
\numberwithin{equation}{section}
\usepackage[table, svgnames, dvipsnames]{xcolor}
\usepackage{makecell, cellspace, caption}
\usepackage{tabularx, ragged2e, booktabs, caption}
\usepackage{amssymb}
\usepackage{amsmath}
\usepackage{color,soul}
\usepackage{enumitem}
\usepackage{tabularx,ragged2e,booktabs,caption}
\usepackage{longtable}
\usepackage{graphicx}
\usepackage[colorlinks]{hyperref}

\begin{document}
	\title{Symplectic Hulls over a Non-Unital Ring}
	\author{Anup Kushwaha}
    \address{Department of Mathematics, Indian Institute of Technology Patna, Patna-801106}
    \curraddr{}
    \email{E-mail: anup$\textunderscore$2221ma11@iitp.ac.in}
    \thanks{}

   \author{ Om Prakash$^*$}
   \address{Department of Mathematics, Indian Institute of Technology Patna, Patna-801106}
   \curraddr{}
   \email{om@iitp.ac.in}
   \thanks{* Corresponding author}
	
	\subjclass{ 94B05, 16L30}
	
	\keywords{Symplectic hull; Non-unital ring; Hull-variation problem; Build-up construction }
	
	\dedicatory{}

\begin{abstract}
This paper presents the study of the symplectic hulls over a non-unital ring $ E= \langle \kappa,\tau \mid 2 \kappa =2 \tau=0,~ \kappa^2=\kappa,~ \tau^2=\tau,~ \kappa \tau=\kappa,~ \tau \kappa=\tau \rangle$. We first identify the residue and torsion codes of the left, right, and two-sided symplectic hulls, and characterize the generator matrix of the two-sided symplectic hull of a free $E$-linear code. Then, we explore the symplectic hull of the sum of two free $E$-linear codes. Subsequently, we provide two build-up techniques that extend a free $E$-linear code of smaller length and symplectic hull-rank to one of larger length and symplectic hull-rank. Further, for free $E$-linear codes, we discuss the permutation equivalence and investigate the symplectic hull-variation problem. An application of this study is given by classifying the free $E$-linear optimal codes for smaller lengths.

\end{abstract}

\maketitle
\section{Introduction}

 In 1990, Assmus and Key \cite{Assmus} introduced the hull of a code while analyzing some properties of finite projective planes. Let $C$ be a linear code of length $n$ over the finite field $\mathbb{F}_q$ with $q$ elements. The hull of $C$ is the intersection of $C$ and its dual $C^\perp$. It has significant importance in both coding theory and cryptography, as many algorithms for computing automorphism groups and checking permutation equivalence rely critically on the dimension of the hull \cite{Leon1,Leon2,Send1997,Send2000}. The computational complexity of these algorithms reduces significantly when the hull dimension is small. Moreover, the hull plays a crucial role in the construction of quantum codes \cite{Dougherty,Liu2}. In view of these diverse applications, the hulls and their associated properties have been studied substantially. \par
 %is called the hull of the code $C$
Recall that the Euclidean dual of an $\mathbb{F}_q$-linear code $C$ of length $n$ is denoted by $C^\perp$, and defined as the collection of all elements of $\mathbb{F}_q^n$ that are orthogonal to $C$ under Euclidean inner product. Further, $Hull(C)=C \cap C^\perp$, and if it is trivial, the code $C$ is called a linear complementary dual (LCD) code. This family of codes has applications in secret sharing schemes \cite{Yadav2025} as well as in cryptographic systems where they offer resistance to side-channel attacks \cite{Carl16}. In 2004, the author in \cite{Send04} employed the notion of hull dimension spectra to show that LCD codes achieve the Gilbert–Varshamov bound. Moreover, Islam and Prakash \cite{Islam2022} investigated cyclic codes over non-chain rings and constructed quantum and LCD codes, while double circulant LCD codes were explored in \cite{Prakash}. \\
On the other hand, the symplectic inner product of two vectors $x=(u|v)$ and $y=(w|z)$ of $\mathbb{F}_q^{2n}$ is defined by
$$\langle x,y \rangle_s=\langle u, z \rangle- \langle v, w \rangle,$$
where $u,v,w,z \in \mathbb{F}_q^n$ and $\langle, \rangle$ denote the Euclidean inner product. While plenty of works on Euclidean hulls are available \cite{Debnath,Dougherty2,Li,Qian,Sok,Wang}, there are a very few works that dealt with symplectic hulls \cite{Li2024,Xu}.\par

In fact, the following research gaps motivate us to present the current paper on symplectic hulls over a non-unital ring $ E= \langle \kappa,\tau \mid 2 \kappa =2 \tau=0,~ \kappa^2=\kappa,~ \tau^2=\tau,~ \kappa \tau=\kappa,~ \tau \kappa=\tau \rangle$:
\begin{enumerate}[label=\textbullet]
    \item Over the past five years, researchers have shown a growing interest in exploring various families of codes over non-unital rings \cite{Alah21,Alah23,Deb,Kushwaha1,Manseri2}. In 2022, Shi et al. \cite{Shi21} initiated the study of Euclidean LCD codes over a non-unitary ring. Motivated by their research, Manseri and Solé \cite{Manseri1} were the first to investigate the symplectic LCD codes over a non-unital ring. Recently, Li and Zhu \cite{Li2024} explored symplectic hulls over finite fields. However, any type of study on symplectic hulls over a non-unital ring is not present in the literature. This research gap is our primary motivation to consider the symplectic inner product to study the hulls over the non-unital ring $E$. We also study the two-sided symplectic hull of the sum of two free $E$-linear codes.

    \item The build-up construction is a well-established technique that extends a linear code with a smaller length to one of larger length. Using this technique, Type IV and QSD codes have been classified over non-unital rings for smaller lengths in \cite{Alah22,Kim}. Subsequently, build-up construction techniques were employed in \cite{Li} for classifying binary linear codes with a fixed Euclidean hull dimension. However, there is no work available in the literature that dealt with the build-up techniques for symplectic hulls over a finite field or a finite ring (with or without unity). This research gap motivated us to propose two build-up construction techniques over the non-unital ring $E$ that extend a free $E$-linear code with a smaller length and symplectic hull-rank to one of larger length and symplectic hull-rank.

    \item The length and dimension of permutation-equivalent linear codes are always identical \cite{Huffman}, but they need not be equal in their duals \cite{Chen}. Consequently, Chen \cite{Chen} proposed and studied the Euclidean hull-variation problem for linear codes over finite fields. Then, Li and Zhu \cite{Li2024} investigated the symplectic hull-variation problem for $\mathbb{F}_q$-linear codes. Motivated by these, we study the symplectic hull-variation problem for the free $E$-linear codes.
\end{enumerate}
We structure this work as follows: Section $2$ deals with the basic concepts of codes over the non-unital ring $E$, while Section $3$ investigates the various symplectic hulls over the ring $E$ and identifies their residue and torsion codes. Additionally, this Section examines the two-sided symplectic hull of the sum of two free $E$-linear codes. Section $4$  provides two build-up construction techniques over the ring $E$ and also gives some illustrative examples. Section $5$ discusses the permutation equivalence and investigates the symplectic hull-variation problem for free $E$-linear codes and also classifies free $E$-linear optimal codes for lengths up to $4$. Section $6$ concludes the work with some possible future directions.
\section{Preliminaries}
The noncommutative ring $ E= \langle \kappa,\tau \mid 2 \kappa =2 \tau=0,~ \kappa^2=\kappa,~ \tau^2=\tau,~ \kappa \tau=\kappa,~ \tau \kappa=\tau \rangle$ appeared in the classification of Fine \cite{Fine93}, is a non-unitary ring with $4$ elements. In this work, we consider the ring $E$ as a code alphabet and study the symplectic hulls of $E$-linear codes. The ring $E$ is a finite ring of characteristic $2$ and contains $4$ elements $\{ ~i \kappa+j \tau~ |~ 0 \leq i,j <2 ~\}$. If we denote $\kappa + \tau=\zeta$, then $E=\{0,\kappa, \tau, \zeta\}$. To characterize the algebraic structure of the ring $E$, its multiplication table is given in Table \ref{Tab2}.

\begin{table}[ht]
\centering
\caption{\label{Tab2} Multiplication table.}
\begin{tabular}{|c|c|c|c|c|}
\hline
  $\cdot$ & $0$ & $\kappa$ & $\tau$ &  $\zeta$ \\
 \hline
 $0$ & $0$ & $0$ & $0$ & $0$  \\

\hline
 $\kappa$ & $0$ & $\kappa$ & $\kappa$ &  $0$ \\
 \hline
 $\tau$ & $0$ & $\tau$ & $\tau$ & $0$
 \\

 \hline
 $\zeta$ & $0$ & $\zeta$ & $\zeta$ & $0$
 \\

\hline
\end{tabular}
\end{table}

From the multiplication table, it follows that the ring $E$ is non-unitary and possesses a unique maximal ideal $J=\{0,\zeta\}$. Consequently, $E$ is a local ring, and its residue field is $E/J=\mathbb{F}_2$. In addition, $e\kappa = e\tau = e$ for all $e \in E$. Moreover, every element $e \in E$ admits a $\zeta$-adic decomposition of the form
 $$e=u \kappa+v \zeta ~~ \text{where}~ u,v \in \mathbb{F}_2.$$
 For all $e \in E$ and $v \in \mathbb{F}_2$, we can define an action of $\mathbb{F}_2$ on $E$ as $ev=ve$. This action is distributive, i.e., $e(u\oplus_{\mathbb{F}_2} v)=ue+ve$ for all $ e \in E$ and $u,v \in \mathbb{F}_2$. Now, the map of reduction modulo $J$ is a  map $\pi:E\rightarrow E/J =\mathbb{F}_2$ defined by
 $$\pi(e)=\pi(u\kappa+v\zeta)=u.$$
 This map extends naturally (componentwise) to a map from $E^{2n}$ to  $\mathbb{F}_2^{2n}$. \par

\begin{definition} A left $E$-submodule of $E^{2n}$ refers to an $E$-linear code of length $2n$.
\end{definition}
For an $E$-linear code $C$, we associate with $C$ the following two $\mathbb{F}_2$-linear codes.
\begin{enumerate}
\item \textbf{Residue code:} The residue code of $C$, denoted by $C_{\mathrm{Res}}$, is defined by
    $$ C_{Res}=\{\pi(z)~ |~ z \in C \}.$$
    \item \textbf{Torsion code:} The $\mathbb{F}_2$-linear code defined by
    $$ C_{Tor}=\{ v \in \mathbb{F}_{2}^{2n} ~ |~ v \zeta \in C\},$$
    is called the torsion code of an $E$-linear code $C$ of length $2n$.
    \end{enumerate}

 \begin{remark}
        Throughout this paper, we fix the notations $\kappa$ and $\tau$ for the generators of the ring $E$, and $ \zeta = \kappa+ \tau $. Moreover, $k_1=dim(C_{Res})$ and $k_2=dim(C_{Tor})-k_1$.

     \end{remark}
 If $C$ is an $E$-linear code and $S=\{  \alpha_1,\alpha_2,\ldots , \alpha_k \} \subset C$, then the $E$-span of $S$ is given by $$\langle S \rangle_{E} = \{ e_1\alpha_1+e_2\alpha_2+ \cdots + e_k\alpha_k ~|~ e_j \in E, ~  1 \leq j \leq k \},$$
and the additive span of the set $S$ is given by
$$\langle S \rangle_{\mathbb{F}_2} = \{ u_1\alpha_1+u_2\alpha_2+ \cdots + u_k\alpha_k ~|~ u_j \in \mathbb{F}_2, ~ 1 \leq j \leq k  \}.$$
 As the ring $E$ lacks a multiplicative identity, the additive span of the set $S$ is not necessarily a subset of its $E$-span. The set $S=\{  \alpha_1,\alpha_2,\ldots , \alpha_k \}$ is termed a generating set for the code $C$ if $C$ is the union of both the spans defined above, i.e.,
    $$\langle S \rangle_{E} \cup \langle S \rangle_{\mathbb{F}_2}=C.$$
Now, if $C$ is an $E$-linear code  of length $2n$ and $X=\{ x_1,x_2,\ldots, x_k \}\subset C$ is its generating set, then a $k \times 2n$ matrix $G_{E}$ with rows $x_1,x_2,\ldots, x_k$ satisfying $\langle G \rangle_{E}=\langle X \rangle_{E} \cup \langle X \rangle_{\mathbb{F}_2}$ refers to a generator matrix  of $C$. A parity-check matrix of $C$ refers to a generator matrix of $C^{\perp_S}$. The freeness of an $E$-linear code $C$ is characterized by the condition $C_{Res} = C_{Tor}$. Equivalently, $C$ is free if $k_2=0$. Then the rank of a free $E$-code $C$, denoted by $rank(C)$, is given by the cardinality of a minimal generating set of $C$. The symplectic weight of a codeword $w=(w_1,w_2, \ldots, w_{2n})$ of an $E$-linear code $C$ of length $2n$ is defined as
$$wt_s(w)=|\{~j ~|~ (w_j,w_{n+j}) \neq (0,0), 1 \leq j \leq n\}|.$$ Then the quantity
$$d_s(C)=\text{min}\{wt_s(w)~|~ w \in C \setminus \{\boldsymbol{0}\} \}$$
 is called the minimum symplectic distance of the code $C$.

 The Euclidean inner product of any vectors  $w=(w_1, w_2, \ldots, w_n)$ and $z=(z_1,z_2,\ldots,z_n)$ of $E^n$ is defined as
    $ \langle w, z  \rangle=\sum_{j=1}^{n}w_j z_j$. Subsequently, for any vectors $x=(u | v)$ and $y=(u' | v')$ of $E^{2n}$, we define
    $$ \langle x, y \rangle_s=\langle u, v' \rangle + \langle v, u' \rangle$$
    to be their symplectic inner product.
  If $\Omega_{2n}=\begin{pmatrix}
      \textbf{0}_n & I_n \\
      I_n & \textbf{0}_n
  \end{pmatrix}$ where $I_n$ is the identity matrix of order $n$, then the symplectic inner product of vectors $x,y \in E^{2n}$ can also be defined as $\langle x,y \rangle_s= x \Omega_{2n}y^T $. Here, $y^T$ denotes the transpose of $y$. Consequently, the left and right symplectic duals of an $E$-code $C$, denoted by $C^{\perp_{S_L}}$ and $C^{\perp_{S_L}}$, respectively, are the $E$-linear codes given by
    $$C^{\perp_{S_L}}= \{ \boldsymbol{z} \in E^{2n}~ |~\langle \boldsymbol{z}, \boldsymbol{w}  \rangle_s= 0,\forall ~\boldsymbol{w}\in C\},$$
    and
        $$C^{\perp_{S_R}}= \{ \boldsymbol{z }\in E^{2n}~ |~\langle \boldsymbol{w}, \boldsymbol{z}  \rangle_s= 0,\forall ~\boldsymbol{w} \in C\}.$$
        The intersection of the above two symplectic duals is called the two-sided symplectic dual, and is denoted by $C^{\perp_s}$, i.e., $C^{\perp_s}=C^{\perp_{S_L}} \cap C^{\perp_{S_R}}$. Further, an $E$-linear code of length $2n$ is termed as left, right and two-sided symplectic nice if $|C||C^{\perp_{S_L}}|=|E|^{2n}$, $|C||C^{\perp_{S_R}}|=|E|^{2n}$ and $|C||C^{\perp_s}|=|E|^{2n}$, respectively. In addition, if $C$ satisfies $C \cap C^{\perp_s}=\{0\}$, it is called a symplectic LCD code. Further, the left, right and two-sided symplectic hulls of an $E$-linear code $C$ are defined respectively as
$$LSHull(C)=C\cap C^{\perp_{S_L}},~~  RSHull(C)=C\cap C^{\perp_{S_R}}, ~~\text{and}~~  SHull(C)=C\cap C^{\perp_S}.$$
The condition $SHull(C)=\{0\}$ also characterizes the symplectic LCD property of an $E$-code.

\section{Symplectic hulls}

This section focuses on the investigation of three symplectic hulls of an $E$-linear code, namely, left, right and two-sided symplectic hulls. The residue and torsion codes of these symplectic hulls are calculated, and their basic properties are discussed. Further, we characterize the form of a generator matrix of the two-sided symplectic hull of a free $E$-linear code and compute the corresponding symplectic hull-rank. It also investigates the symplectic hull of the sum of two free $E$-linear codes.\par

 The following key result corresponds to Lemma $1$ and Theorem $7$ of \cite{Alah23}.

\begin{theorem}\cite{Alah23}\label{Thm1c}
    For an  $E$-linear code $C$, $\kappa C_{Res} \subseteq C$ and $\zeta C_{Tor} \subseteq C$. Moreover, $C=\kappa C_{Res} \oplus \zeta C_{Tor}$.
\end{theorem}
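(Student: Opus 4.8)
The plan is to work componentwise through the $\zeta$-adic decomposition $e = u\kappa + v\zeta$ ($u,v \in \mathbb{F}_2$), using that $C$ is a left $E$-module and that the relevant entries of the multiplication table are $\kappa\kappa = \kappa$ and $\kappa\zeta = 0$. Given a codeword $z = (z_1,\ldots,z_{2n}) \in C$, I write $z_j = u_j\kappa + v_j\zeta$ and set $\bar{u} = (u_1,\ldots,u_{2n})$ and $\bar{v} = (v_1,\ldots,v_{2n})$, so that $\pi(z) = \bar{u} \in C_{Res}$. These two identities are exactly what control how $\kappa(\cdot)$ and $\zeta(\cdot)$ act on the coordinates, and the whole argument is organized around isolating the $\kappa$-part and the $\zeta$-part of each coordinate.

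First I would establish the two inclusions. The containment $\zeta C_{Tor} \subseteq C$ is immediate from the definition of the torsion code: every $w \in C_{Tor}$ satisfies $w\zeta = \zeta w \in C$. For $\kappa C_{Res} \subseteq C$, I would compute the left multiple $\kappa z$ of an arbitrary $z \in C$. Distributivity together with $\kappa\kappa = \kappa$ and $\kappa\zeta = 0$ gives $\kappa z_j = \kappa(u_j\kappa + v_j\zeta) = u_j\kappa$ in every coordinate, so that $\kappa z = \kappa\,\pi(z)$ as a vector over $E$. Since $C$ is a left module, $\kappa z \in C$; as $\pi(z)$ ranges over all of $C_{Res}$, this yields $\kappa C_{Res} \subseteq C$.

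For the decomposition $C = \kappa C_{Res} \oplus \zeta C_{Tor}$, the directness is the easy half: every coordinate of an element of $\kappa C_{Res}$ lies in $\{0,\kappa\}$, every coordinate of an element of $\zeta C_{Tor}$ lies in $\{0,\zeta\}$, and since $\{0,\kappa\} \cap \{0,\zeta\} = \{0\}$, these two submodules meet only in the zero vector. To see that they span $C$, I would revisit $z_j = u_j\kappa + v_j\zeta$ and observe that $z - \kappa\,\pi(z)$ has $j$-th coordinate $v_j\zeta$, i.e. $z - \kappa\,\pi(z) = \zeta\bar{v}$. The left-hand side lies in $C$ because both $z$ and $\kappa\,\pi(z)$ do; hence $\zeta\bar{v} \in C$, which is precisely the statement $\bar{v} \in C_{Tor}$. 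Therefore $z = \kappa\,\pi(z) + \zeta\bar{v} \in \kappa C_{Res} + \zeta C_{Tor}$, giving the reverse containment.

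The computations are routine once the $\zeta$-adic decomposition is fixed; the one point demanding care is the non-unitality of $E$. Because there is no identity, one cannot split $z$ as a scalar multiple of itself, and one must instead verify directly that the residual part $z - \kappa\,\pi(z)$ again lies in $C$ before declaring its coordinates a torsion vector. I also expect the asymmetry of $E$ to be the subtle feature to get right: left multiplication by $\kappa$ annihilates $\zeta$ whereas right multiplication by $\kappa$ is the identity, so it is specifically the \emph{left}-module structure that isolates the residue part, and I would be careful to use left multiplication throughout rather than invoking the right action.
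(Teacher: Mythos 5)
Your proof is correct. Note that the paper itself gives no proof of this statement --- it is imported verbatim from \cite{Alah23} (Lemma 1 and Theorem 7 there) --- so there is no in-paper argument to compare against; your coordinatewise derivation via the $\zeta$-adic decomposition $z_j = u_j\kappa + v_j\zeta$, the identities $\kappa\kappa=\kappa$ and $\kappa\zeta=0$, and the observation that $z-\kappa z = \zeta\bar v$ forces $\bar v\in C_{Tor}$ is exactly the natural (and presumably the original) route. Your closing caveat about using the \emph{left} action is well taken, since right multiplication by $\kappa$ is the identity on $E$ and would extract nothing.
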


Here, we calculate the residue and torsion codes of the three symplectic dual codes.

\begin{theorem}\label{Thm1b}
 For an  $E$-linear code $C$ of length $2n$,

\begin{enumerate}

    \item $(C^{\perp_{S_L}})_{Res}=(C_{Res})^{\perp_S}=(C^{\perp_{S_L}})_{Tor}$;
    \item $(C^{\perp_{S_R}})_{Res}=(C_{Tor})^{\perp_S}$ and $(C^{\perp_{S_R}})_{Tor}=\mathbb{F}_2^{2n}$;
    \item $(C^{\perp_S})_{Res}=(C_{Tor})^{\perp_S}$ and $(C^{\perp_S})_{Tor}=(C_{Res})^{\perp_S}$.

    \end{enumerate}
\end{theorem}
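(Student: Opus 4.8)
The plan is to reduce every symplectic orthogonality condition over $E$ to a pair of symplectic orthogonality conditions over $\mathbb{F}_2$, by splitting each vector into its two $\zeta$-adic coordinates. Concretely, I would write each $z\in E^{2n}$ uniquely as $z=\pi(z)\kappa+\sigma(z)\zeta$ with $\pi(z),\sigma(z)\in\mathbb{F}_2^{2n}$, where $\pi$ is the componentwise reduction map and $\sigma$ is the complementary ``torsion coordinate.'' The computation rests on the multiplication table of $E$: for all $e\in E$ one has $e\kappa=e$ and $e\zeta=0$, while $\kappa\kappa=\kappa$ and $\zeta\kappa=\zeta$. Expanding a single product $w_jz_j$ with $w_j=a_j\kappa+b_j\zeta$ and $z_j=c_j\kappa+d_j\zeta$ then collapses to $w_jz_j=a_jc_j\,\kappa+b_jc_j\,\zeta$; note the asymmetry, only the residue $c_j=\pi(z)_j$ of the second factor survives, because right multiplication by $\zeta$ annihilates $E$.

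Summing componentwise and separating the two halves $(u|v)$ and $(u'|v')$, the central step is to establish the master identity
\[ \langle x,y\rangle_s=\langle \pi(x),\pi(y)\rangle_s\,\kappa+\langle \sigma(x),\pi(y)\rangle_s\,\zeta, \]
where the inner products on the right are the $\mathbb{F}_2$-symplectic forms. Since $\kappa$ and $\zeta$ are $\mathbb{F}_2$-independent, $\langle x,y\rangle_s=0$ in $E$ is equivalent to the simultaneous vanishing over $\mathbb{F}_2$ of $\langle \pi(x),\pi(y)\rangle_s$ and $\langle \sigma(x),\pi(y)\rangle_s$. This identity is the crux, and getting its asymmetry right (the second argument enters only through its residue, while the first enters through both coordinates) is the main obstacle; everything afterward is bookkeeping.

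With the identity in hand I would read off the three duals. For $C^{\perp_{S_L}}$ I test $\langle z,w\rangle_s=0$ for all $w\in C$; as $w$ ranges over $C$, $\pi(w)$ ranges over $C_{Res}$, so $z\in C^{\perp_{S_L}}$ iff both $\pi(z)$ and $\sigma(z)$ lie in $(C_{Res})^{\perp_S}$, which immediately gives $(C^{\perp_{S_L}})_{Res}=(C^{\perp_{S_L}})_{Tor}=(C_{Res})^{\perp_S}$ (take $\sigma(z)=0$ resp. $\pi(z)=0$). For $C^{\perp_{S_R}}$ I test $\langle w,z\rangle_s=0$; now only $\pi(z)$ is constrained, and it must be symplectically orthogonal both to $C_{Res}=\{\pi(w)\}$ and to $\{\sigma(w):w\in C\}$. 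Here I invoke Theorem \ref{Thm1c}: the decomposition $C=\kappa C_{Res}\oplus\zeta C_{Tor}$ shows $\{\sigma(w):w\in C\}=C_{Tor}$ and $C_{Res}\subseteq C_{Tor}$, so the constraint reduces to $\pi(z)\in(C_{Tor})^{\perp_S}$ with $\sigma(z)$ free; this yields $(C^{\perp_{S_R}})_{Res}=(C_{Tor})^{\perp_S}$ and $(C^{\perp_{S_R}})_{Tor}=\mathbb{F}_2^{2n}$. Finally, intersecting the two descriptions gives $z\in C^{\perp_S}$ iff $\pi(z)\in(C_{Tor})^{\perp_S}$ and $\sigma(z)\in(C_{Res})^{\perp_S}$, from which part (3) follows by the same ``set one coordinate to zero'' argument.
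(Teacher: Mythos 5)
Your proposal is correct, and I verified the master identity: with $w_j=a_j\kappa+b_j\zeta$ and $z_j=c_j\kappa+d_j\zeta$ the table gives $w_jz_j=a_jc_j\kappa+b_jc_j\zeta$, so indeed $\langle x,y\rangle_s=\langle \pi(x),\pi(y)\rangle_s\,\kappa+\langle \sigma(x),\pi(y)\rangle_s\,\zeta$, and since $\{\kappa,\zeta\}$ is an $\mathbb{F}_2$-basis of $E$ the vanishing of the $E$-valued form is equivalent to the vanishing of both $\mathbb{F}_2$-valued forms. The underlying mechanism is the same as the paper's --- right multiplication by $\zeta$ annihilates $E$, so only the residue of the right-hand factor survives, and Theorem \ref{Thm1c} supplies the decomposition $C=\kappa C_{Res}\oplus\zeta C_{Tor}$ --- but your organization differs in two worthwhile ways. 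First, the paper disposes of part (1) by citing Lemma 7 of \cite{Manseri1} and then proves part (2) by ad hoc computations of specific products such as $\langle \zeta v,\kappa u\rangle_s$ and $\langle x,\zeta u\rangle_s$, whereas your single bilinear identity yields all three parts uniformly and makes the proof self-contained. Second, the paper obtains part (3) by intersecting the module decompositions of $C^{\perp_{S_L}}$ and $C^{\perp_{S_R}}$ from Corollary \ref{Cor3}, which is essentially the same intersection step you perform at the level of the coordinates $(\pi(z),\sigma(z))$. What your route buys is transparency about \emph{why} the left and right duals behave asymmetrically (the second argument enters only through its residue); what it costs is that you must separately justify $\{\sigma(w):w\in C\}=C_{Tor}$ and $C_{Res}\subseteq C_{Tor}$, both of which you correctly flag and which do follow from Theorem \ref{Thm1c}. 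No gaps.
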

\begin{proof}
    \begin{enumerate}
        \item This statement is precisely Lemma $7$ of \cite{Manseri1}.
        \item Let $ u \in (C^{\perp_{S_R}})_{Res}$ and $v \in C_{Tor}$. By Theorem \ref{Thm1c}, $\kappa u \in C^{\perp_{S_R}}\in $ and $\zeta v \in C$. Then, the inner product $\langle \zeta v, \kappa u \rangle_s=\zeta \langle v,u \rangle_s=0$ implies that $\langle v,u \rangle_s=0$. Hence, $ u \in (C_{Tor})^{\perp_S}$, and so $(C^{\perp_{S_R}})_{Res} \subseteq (C_{Tor})^{\perp_S}$. For the converse, let $v \in (C_{Tor})^{\perp_S}$. Also, suppose that $x \in C$. From Theorem \ref{Thm1c}, there exist some $u \in C_{Res}$ and $w \in C_{Tor}$ such that $x= \kappa u + \zeta w$. Then, we have
        $$\langle x, \kappa v \rangle_s=\langle \kappa u + \zeta w, \kappa v \rangle_s= \kappa \langle u, v \rangle_s+ \zeta \langle w,v \rangle_s =0. $$
      This shows that $\kappa v \in C^{\perp_{S_R}}$. Since $\pi(\kappa v)=v$, $v \in (C^{\perp_{S_R}})_{Res}$. Therefore, $(C_{Tor})^{\perp_S} \subseteq (C^{\perp_{S_R}})_{Res}$. Thus, $(C^{\perp_{S_R}})_{Res}=(C_{Tor})^{\perp_S}$.\par
   In order to show $(C^{\perp_{S_R}})_{Tor}=\mathbb{F}_2^{2n}$, we only need to prove $\mathbb{F}_2^{2n} \subseteq (C^{\perp_{S_R}})_{Tor}$. For this, let $ u \in \mathbb{F}_2^{2n}$ and $x \in C$. By Theorem \ref{Thm1c}, there exist some $v \in C_{Res}$ and $w \in C_{Tor}$ such that $x= \kappa v + \zeta w$. Then, the inner product
    $$\langle x, \zeta u \rangle_s=\langle \kappa v + \zeta w, \zeta u \rangle_s=  \langle \kappa v, \zeta u \rangle_s+  \langle \zeta w,\kappa u \rangle_s =0. $$
Hence, $\zeta u \in C^{\perp_{S_R}}$, and so $ u \in (C^{\perp_{S_R}})_{Tor}$. Therefore, $\mathbb{F}_2^{2n} \subseteq (C^{\perp_{S_R}})_{Tor}$ and the proof is complete.

        \item By the first two parts and Theorem \ref{Thm1c}, we have
     \begin{align*}
        C^{\perp_S} &=C^{\perp_{S_L}} \cap C^{\perp_{S_R}} \\
        &=(\kappa (C_{Res})^{\perp_S} \oplus \zeta (C_{Res})^{\perp_S}) \cap (\kappa (C_{Tor})^{\perp_S} \oplus \zeta \mathbb{F}_2^{2n}) \\
        &= \kappa ( (C_{Res})^{\perp_S} \cap (C_{Tor})^{\perp_S}) \oplus \zeta ((C_{Res})^{\perp_S} \cap \mathbb{F}_2^{2n}) \\
        &= \kappa (C_{Tor})^{\perp_S} \oplus \zeta (C_{Res})^{\perp_S}.
        \end{align*}
       Thus, $(C^{\perp_S})_{Res}=(C_{Tor})^{\perp_S}$ and $(C^{\perp_S})_{Tor}=(C_{Res})^{\perp_S}$.
    \end{enumerate}
\end{proof}

The following corollary is immediate by using Theorems \ref{Thm1c} and \ref{Thm1b}.

\begin{corollary}\label{Cor3}
    Let $C$ be an $E$-linear code of length $2n$. Then
    \begin{enumerate}
        \item $C^{\perp_{S_L}}=\kappa (C_{Res})^{\perp_S} \oplus \zeta (C_{Res})^{\perp_S}$;
        \item $C^{\perp_{S_R}}=\kappa (C_{Tor})^{\perp_S} \oplus \zeta \mathbb{F}_2^{2n}$;
        \item $C^{\perp_S}=\kappa (C_{Tor})^{\perp_S} \oplus \zeta (C_{Res})^{\perp_S}$.
    \end{enumerate}
\end{corollary}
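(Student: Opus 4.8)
The plan is to read each identity off the canonical decomposition of Theorem~\ref{Thm1c}: every symplectic dual of $C$ is again an $E$-linear code, so Theorem~\ref{Thm1c} applies to it and expresses it as $\kappa(\,\cdot\,)_{Res}\oplus\zeta(\,\cdot\,)_{Tor}$; the residue and torsion codes appearing here have already been computed in Theorem~\ref{Thm1b}, so each part of the corollary reduces to a single substitution.

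Concretely, for part (1) I would apply Theorem~\ref{Thm1c} to the $E$-linear code $C^{\perp_{S_L}}$ to get
\[
C^{\perp_{S_L}}=\kappa\,(C^{\perp_{S_L}})_{Res}\oplus\zeta\,(C^{\perp_{S_L}})_{Tor},
\]
and then insert $(C^{\perp_{S_L}})_{Res}=(C_{Res})^{\perp_S}=(C^{\perp_{S_L}})_{Tor}$ from Theorem~\ref{Thm1b}(1). For part (2) the same decomposition applied to $C^{\perp_{S_R}}$, together with $(C^{\perp_{S_R}})_{Res}=(C_{Tor})^{\perp_S}$ and $(C^{\perp_{S_R}})_{Tor}=\mathbb{F}_2^{2n}$ from Theorem~\ref{Thm1b}(2), yields the claim. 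For part (3) I would decompose $C^{\perp_S}$ and substitute $(C^{\perp_S})_{Res}=(C_{Tor})^{\perp_S}$ and $(C^{\perp_S})_{Tor}=(C_{Res})^{\perp_S}$ from Theorem~\ref{Thm1b}(3); in fact this last decomposition is exactly the displayed chain of equalities already obtained inside the proof of Theorem~\ref{Thm1b}(3), so (3) requires no new work at all.

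Since everything rests on the two quoted theorems, there is no real obstacle. The one point deserving a line of justification is the legitimacy of invoking Theorem~\ref{Thm1c} for the duals, i.e. that $C^{\perp_{S_L}}$, $C^{\perp_{S_R}}$ and $C^{\perp_S}$ are genuinely $E$-linear codes (left $E$-submodules of $E^{2n}$). For the left dual this is transparent from $\langle ez,w\rangle_s=e\,\langle z,w\rangle_s$, which follows from the matrix form $\langle x,y\rangle_s=x\,\Omega_{2n}\,y^T$; for the right dual one checks closure under left multiplication by a short computation using the $\zeta$-adic decomposition $e=u\kappa+v\zeta$ and the identities $w_i\kappa=w_i$, $w_i\zeta=0$, which give $\langle w,ez\rangle_s=\pi(e)\,\langle w,z\rangle_s$. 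These closure facts are in any case already tacit in Theorem~\ref{Thm1b}, where the residue and torsion codes of the duals are used. With $E$-linearity in hand, each part is the promised one-step substitution.
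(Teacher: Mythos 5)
Your proposal is correct and is exactly the argument the paper intends: the paper states the corollary is ``immediate by using Theorems \ref{Thm1c} and \ref{Thm1b}'', i.e.\ apply the decomposition $D=\kappa D_{Res}\oplus\zeta D_{Tor}$ to each dual and substitute the residue and torsion codes computed in Theorem \ref{Thm1b} (with part (3) already appearing verbatim as the displayed chain inside the proof of Theorem \ref{Thm1b}(3)). Your added verification that the three duals are themselves $E$-linear codes is a sensible extra detail the paper leaves tacit, but it does not change the route.
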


 Now, we check the freeness of the three symplectic duals of an $E$-linear code.

 \begin{theorem}
     If $C$ is an $E$-linear code, then $C^{\perp_{S_L}}$ is always free but $C^{\perp_{S_R}}$ is never free. Further, $C^{\perp_S}$ is free if and only if $C$ is free.
 \end{theorem}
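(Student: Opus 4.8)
The plan is to reduce every assertion to the defining criterion of freeness — an $E$-linear code $D$ is free precisely when $D_{Res} = D_{Tor}$ — and then read off the residue and torsion codes of the three duals directly from Theorem \ref{Thm1b}. The only additional input I need is the standard linear algebra of the symplectic form on $\mathbb{F}_2^{2n}$: since $\Omega_{2n}$ is invertible the form is nondegenerate, so for every subspace $D \subseteq \mathbb{F}_2^{2n}$ one has $\dim D + \dim D^{\perp_S} = 2n$ and $(D^{\perp_S})^{\perp_S} = D$. In particular $D \mapsto D^{\perp_S}$ is an inclusion-reversing involution, $D^{\perp_S} = \mathbb{F}_2^{2n} \iff D = \{0\}$, and $D_1^{\perp_S} = D_2^{\perp_S} \iff D_1 = D_2$. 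I would record these facts once at the outset so that the three claims become immediate comparisons of residue and torsion codes.

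For the left dual, Theorem \ref{Thm1b}(1) gives $(C^{\perp_{S_L}})_{Res} = (C_{Res})^{\perp_S} = (C^{\perp_{S_L}})_{Tor}$; the residue and torsion codes coincide, so $C^{\perp_{S_L}}$ is free, and no hypothesis on $C$ is used. For the two-sided dual, Theorem \ref{Thm1b}(3) gives $(C^{\perp_S})_{Res} = (C_{Tor})^{\perp_S}$ and $(C^{\perp_S})_{Tor} = (C_{Res})^{\perp_S}$, so $C^{\perp_S}$ is free iff $(C_{Tor})^{\perp_S} = (C_{Res})^{\perp_S}$. Applying $\perp_S$ once more and invoking the involution property, this is equivalent to $C_{Tor} = C_{Res}$, which is exactly the freeness of $C$. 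These two steps dispose of the first and third assertions.

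For the right dual, Theorem \ref{Thm1b}(2) gives $(C^{\perp_{S_R}})_{Res} = (C_{Tor})^{\perp_S}$ while $(C^{\perp_{S_R}})_{Tor} = \mathbb{F}_2^{2n}$ is always full. Freeness would force $(C_{Tor})^{\perp_S} = \mathbb{F}_2^{2n}$, i.e. $C_{Tor} = \{0\}$ by nondegeneracy. But $C_{Res} \subseteq C_{Tor}$ — from $\kappa C_{Res} \subseteq C$ and $\zeta \kappa = \zeta$ we get $\zeta z = \zeta(\kappa z) \in C$ for every $z \in C_{Res}$, so $z \in C_{Tor}$ — whence $C_{Tor} = \{0\}$ forces $C_{Res} = \{0\}$ and then $C = \kappa C_{Res} \oplus \zeta C_{Tor} = \{0\}$ by Theorem \ref{Thm1c}. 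Thus for every nonzero $C$ the residue code of $C^{\perp_{S_R}}$ is a proper subspace of its torsion code, so $C^{\perp_{S_R}}$ is not free.

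The step I expect to be most delicate is this middle assertion, not for its linear algebra but because of the genuine edge case $C = \{0\}$: there $C^{\perp_{S_R}} = E^{2n}$, which is free, so the blanket word ``never'' should be read as applying to nonzero codes. In writing this up I would either restrict to nonzero $C$ explicitly or record the equivalence $C_{Tor} = \{0\} \iff C = \{0\}$ and flag the trivial exception. The remainder is routine once the nondegeneracy facts for $\Omega_{2n}$ over $\mathbb{F}_2$ are in hand.
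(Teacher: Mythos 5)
Your proposal is correct and follows the same basic strategy as the paper: reduce freeness of each dual to the criterion $D_{Res}=D_{Tor}$ and read off the residue and torsion codes from Theorem \ref{Thm1b}; your treatment of the two-sided dual is essentially identical to the paper's. The one real difference is in the first two assertions: the paper simply cites Theorem $3$ of \cite{Manseri1} for the freeness of $C^{\perp_{S_L}}$ and the non-freeness of $C^{\perp_{S_R}}$, whereas you derive both internally from parts (1) and (2) of Theorem \ref{Thm1b} together with nondegeneracy of the symplectic form on $\mathbb{F}_2^{2n}$ (so that $(C_{Tor})^{\perp_S}=\mathbb{F}_2^{2n}$ forces $C_{Tor}=\{0\}$ and hence $C=\{0\}$ via Theorem \ref{Thm1c}). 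This buys a self-contained argument and, more importantly, exposes an edge case the paper glosses over: for $C=\{0\}$ one has $C^{\perp_{S_R}}=E^{2n}$, whose residue and torsion codes are both $\mathbb{F}_2^{2n}$, so the right dual of the zero code \emph{is} free and the word ``never'' in the statement must be read as ``never, for $C\neq\{0\}$.'' Your suggestion to record the equivalence $C_{Tor}=\{0\}\iff C=\{0\}$ and restrict to nonzero codes is the correct fix; the rest of your argument is sound as written.
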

\begin{proof}
  By Theorem $3$ of \cite{Manseri1}, the freeness of the left symplectic dual and the non-freeness of the right symplectic dual are immediate. On the other hand, if $C$ is a free $E$-linear code, then from $C_{Res}=C_{Tor}$ and Theorem \ref{Thm1b}(3), we have
  $$(C^{\perp_S})_{Res}=(C_{Tor})^{\perp_S}=(C_{Res})^{\perp_S}=(C^{\perp_S})_{Tor}.$$
  Therefore, the symplectic dual is also free. For the converse part, suppose that the symplectic dual $C^{\perp_S}$ is free. The freeness of $C^{\perp_S}$ and Theorem \ref{Thm1b}(3) imply that $(C_{Res})^{\perp_S}=(C_{Tor})^{\perp_S}$. Therefore, $C_{Res}=C_{Tor}$, and hence $C$ is free.

\end{proof}

The following result calculates the residue and torsion codes of the three symplectic hulls of an $E$-linear code.

\begin{theorem}\label{Thm2}
    For an $E$-linear code $C$,
\begin{enumerate}
    \item $(LSHull(C))_{Res}=C_{Res}\cap (C_{Res})^{\perp_S}$ and $(LSHull(C))_{Tor}=(C_{Res})^{\perp_S} \cap C_{Tor}$;
    \item $(RSHull(C))_{Res}=C_{Res} \cap (C_{Tor})^{\perp_S}$ and $(RSHull(C))_{Tor}=C_{Tor}$;
    \item $(SHull(C))_{Res}=C_{Res} \cap (C_{Tor})^{\perp_S}$ and $(SHull(C))_{Tor}=(C_{Res})^{\perp_S} \cap C_{Tor}$.
\end{enumerate}
\end{theorem}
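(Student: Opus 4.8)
The plan is to reduce all three parts to a single structural fact about how residue and torsion codes behave under intersection, and then substitute the residue and torsion codes of the relevant symplectic duals already computed in Theorem \ref{Thm1b} (equivalently, Corollary \ref{Cor3}).

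First I would isolate the key lemma: for any two $E$-linear codes $A$ and $B$ of length $2n$,
\[
(A\cap B)_{Res}=A_{Res}\cap B_{Res}\qquad\text{and}\qquad (A\cap B)_{Tor}=A_{Tor}\cap B_{Tor}.
\]
The torsion identity is immediate from the definition: $b\in (A\cap B)_{Tor}$ means $\zeta b\in A\cap B$, i.e. $\zeta b\in A$ and $\zeta b\in B$, which is exactly $b\in A_{Tor}\cap B_{Tor}$. For the residue identity, the inclusion $(A\cap B)_{Res}\subseteq A_{Res}\cap B_{Res}$ holds for free, since $\pi$ is additive and $\pi(A\cap B)\subseteq \pi(A)\cap\pi(B)$. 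The reverse inclusion is where the ring structure enters: given $a\in A_{Res}\cap B_{Res}$, Theorem \ref{Thm1c} supplies $\kappa a\in \kappa A_{Res}\subseteq A$ and $\kappa a\in\kappa B_{Res}\subseteq B$, so $\kappa a\in A\cap B$; since $\pi(\kappa a)=a$, this gives $a\in (A\cap B)_{Res}$.

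With this lemma in hand, each part follows by taking $A=C$ and $B$ to be the appropriate symplectic dual, and reading off the residue and torsion codes of $B$ from Theorem \ref{Thm1b}. For part (1), $B=C^{\perp_{S_L}}$ has $(C^{\perp_{S_L}})_{Res}=(C^{\perp_{S_L}})_{Tor}=(C_{Res})^{\perp_S}$, yielding $(LSHull(C))_{Res}=C_{Res}\cap(C_{Res})^{\perp_S}$ and $(LSHull(C))_{Tor}=C_{Tor}\cap(C_{Res})^{\perp_S}$. For part (2), $B=C^{\perp_{S_R}}$ has $(C^{\perp_{S_R}})_{Res}=(C_{Tor})^{\perp_S}$ and $(C^{\perp_{S_R}})_{Tor}=\mathbb{F}_2^{2n}$; since intersecting $C_{Tor}$ with all of $\mathbb{F}_2^{2n}$ returns $C_{Tor}$, we get $(RSHull(C))_{Res}=C_{Res}\cap(C_{Tor})^{\perp_S}$ and $(RSHull(C))_{Tor}=C_{Tor}$. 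For part (3), $B=C^{\perp_S}$ has $(C^{\perp_S})_{Res}=(C_{Tor})^{\perp_S}$ and $(C^{\perp_S})_{Tor}=(C_{Res})^{\perp_S}$, which produce the stated formulas.

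I expect the only genuinely delicate point to be the reverse inclusion in the residue identity: the image under $\pi$ does not commute with intersection for arbitrary submodules, and equality here relies specifically on the canonical lift $a\mapsto\kappa a$ provided by Theorem \ref{Thm1c}, which keeps the residue unchanged while landing back inside each code. Everything else is direct substitution, so I would prove the lemma once and invoke it three times rather than repeat the intersection argument in each part.
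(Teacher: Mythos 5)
Your proposal is correct and takes essentially the same route as the paper: both arguments come down to Theorem \ref{Thm1b} (equivalently Corollary \ref{Cor3}) together with the compatibility of intersection with the residue--torsion structure of an $E$-linear code. The only difference is one of packaging --- you isolate and actually prove the interchange lemma $(A\cap B)_{Res}=A_{Res}\cap B_{Res}$ and $(A\cap B)_{Tor}=A_{Tor}\cap B_{Tor}$ (correctly using the lift $a\mapsto\kappa a$ from Theorem \ref{Thm1c} for the nontrivial inclusion), whereas the paper distributes the intersection across the decompositions $\kappa(\cdot)\oplus\zeta(\cdot)$ without explicitly justifying that step, so your version makes rigorous the one point the paper glosses over.
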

\begin{proof}  \begin{enumerate}
   \item Given an $E$-linear code $C$, by Corollary \ref{Cor3}(1), we have $C^{\perp_{S_L}}=\kappa (C_{Res})^{\perp_S} \oplus \zeta (C_{Res})^{\perp_S}$. Therefore,
\begin{align*}
    LSHull(C) &= C \cap C^{\perp_{S_L}} \\
         & = (\kappa C_{Res}\oplus \zeta C_{Tor})\cap (\kappa (C_{Res})^{\perp_S} \oplus \zeta (C_{Res})^{\perp_S}) \\
         &= \kappa (C_{Res}\cap (C_{Res})^{\perp_S}) \oplus \zeta (C_{Res})^{\perp_S} \cap C_{Tor}).
\end{align*}
   Thus, $(LSHull(C))_{Res}=C_{Res}\cap (C_{Res})^{\perp_S}$ and $(LSHull(C))_{Tor}=(C_{Res})^{\perp_S} \cap C_{Tor}$.

   \item For an $E$-linear code $C$ of length $2n$, Corollary \ref{Cor3}(2) implies that $C^{\perp_{S_R}}=\kappa (C_{Tor})^{\perp_S} \oplus \zeta \mathbb{F}_2^{2n}$. Then
   \begin{align*}
      RSHull(C) &= C \cap C^{\perp_{S_R}} \\
         & = (\kappa C_{Res}\oplus \zeta C_{Tor})\cap (\kappa (C_{Tor})^{\perp_S} \oplus \zeta \mathbb{F}_2^{2n}) \\
         &= \kappa (C_{Res} \cap (C_{Tor})^{\perp_S}) \oplus \zeta (  C_{Tor} \cap \mathbb{F}_2^{2n}) \\
         &= \kappa (C_{Res} \cap (C_{Tor})^{\perp_S}) \oplus \zeta (  C_{Tor}).
   \end{align*}
   Therefore, $(RSHull(C))_{Res}=C_{Res} \cap (C_{Tor})^{\perp_S}$ and $(RSHull(C))_{Tor}=C_{Tor}$.

   \item Again, from Corollary \ref{Cor3}(3), $C^{\perp_S}=\kappa (C_{Tor})^{\perp_S} \oplus \zeta (C_{Res})^{\perp_S}$. Therefore,
      \begin{align*}
    SHull(C)&= C \cap C^{\perp_S}  \\
    &=(\kappa C_{Res}\oplus \zeta C_{Tor})\cap (\kappa (C_{Tor})^{\perp_S} \oplus \zeta (C_{Res})^{\perp_S}) \\
    &= \kappa (C_{Res} \cap (C_{Tor})^{\perp_S}) \oplus \zeta ((C_{Res})^{\perp_S} \cap C_{Tor}).
    \end{align*}
   Thus, $(SHull(C))_{Res}=C_{Res} \cap (C_{Tor})^{\perp_S}$ and $(SHull(C))_{Tor}=(C_{Res})^{\perp_S} \cap C_{Tor}$.

\end{enumerate}
\end{proof}

The above theorem immediately yields the following important corollary.
\begin{corollary}\label{Cor1}
  For an $E$-linear code  $C$,
   $$(SHull(C))_{Res} \subseteq SHull(C_{Res}) ~\text{but}~~ (SHull(C))_{Tor} \supseteq SHull(C_{Tor}).$$
Moreover, equality holds if $C$ is free.
\end{corollary}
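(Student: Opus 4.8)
The plan is to read both sides of each inclusion directly off the formulas of Theorem~\ref{Thm2}(3), namely $(SHull(C))_{Res}=C_{Res}\cap (C_{Tor})^{\perp_S}$ and $(SHull(C))_{Tor}=(C_{Res})^{\perp_S}\cap C_{Tor}$, and to compare them with the symplectic hulls of the associated $\mathbb{F}_2$-linear codes, $SHull(C_{Res})=C_{Res}\cap (C_{Res})^{\perp_S}$ and $SHull(C_{Tor})=C_{Tor}\cap (C_{Tor})^{\perp_S}$. Once these four expressions are on the table, the entire statement reduces to a single containment between the $\mathbb{F}_2$-linear codes $C_{Res}$ and $C_{Tor}$, together with the order-reversing behaviour of the symplectic dual over $\mathbb{F}_2$.

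The key step is to establish that $C_{Res}\subseteq C_{Tor}$ for every $E$-linear code $C$. I would prove this from the module structure and the ring relations $\zeta\kappa=\zeta$ and $\zeta^2=0$: for any $z\in C$, applying the $\zeta$-adic decomposition componentwise gives $\zeta z=\pi(z)\zeta$, and since $C$ is a left $E$-module we have $\pi(z)\zeta=\zeta z\in C$, so $\pi(z)\in C_{Tor}$. As $\pi(z)$ ranges over $C_{Res}$, this yields $C_{Res}\subseteq C_{Tor}$. Applying the inclusion-reversing symplectic dual then gives $(C_{Tor})^{\perp_S}\subseteq (C_{Res})^{\perp_S}$.

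With this in hand both inclusions are immediate. Intersecting $(C_{Tor})^{\perp_S}\subseteq (C_{Res})^{\perp_S}$ with $C_{Res}$ gives $(SHull(C))_{Res}=C_{Res}\cap (C_{Tor})^{\perp_S}\subseteq C_{Res}\cap (C_{Res})^{\perp_S}=SHull(C_{Res})$, while intersecting the same containment with $C_{Tor}$ gives $SHull(C_{Tor})=C_{Tor}\cap (C_{Tor})^{\perp_S}\subseteq C_{Tor}\cap (C_{Res})^{\perp_S}=(SHull(C))_{Tor}$, which is exactly the reverse inclusion claimed. For the final assertion I would invoke the freeness criterion $C_{Res}=C_{Tor}$: when it holds we have $(C_{Res})^{\perp_S}=(C_{Tor})^{\perp_S}$, so both displayed inclusions collapse to equalities.

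The argument is otherwise routine set-theoretic manipulation of the Theorem~\ref{Thm2}(3) formulas, so the only genuine content—and the step I would be most careful about—is the containment $C_{Res}\subseteq C_{Tor}$; everything else follows formally once the dual is known to reverse inclusions. In particular, I would double-check the componentwise identity $\zeta z=\pi(z)\zeta$ against the multiplication table, since it is precisely what converts membership in $C$ (a left-module condition) into membership in the torsion code.
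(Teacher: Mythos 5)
Your proposal is correct and follows essentially the same route as the paper: read both formulas off Theorem~\ref{Thm2}(3), use $C_{Res}\subseteq C_{Tor}$ and the inclusion-reversing property of the symplectic dual to get both containments, and collapse them to equalities via $C_{Res}=C_{Tor}$ in the free case. The only difference is that you supply an explicit ring-theoretic justification of $C_{Res}\subseteq C_{Tor}$ (via $\zeta z=\pi(z)\zeta$), which the paper simply cites as known; your verification of that identity is consistent with the multiplication table.
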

\begin{proof}
  For an $E$-linear code $C$, Theorem \ref{Thm2}(3) provides
   $$(SHull(C))_{Res}=C_{Res} \cap (C_{Tor})^{\perp_S} ~~\text{and}~~ (SHull(C))_{Tor}=(C_{Res})^{\perp_S} \cap C_{Tor}.$$
   We know that $C_{Res} \subseteq C_{Tor}$ if $C$ is $E$-linear. Then, $(C_{Tor})^{\perp_S} \subseteq (C_{Res})^{\perp_S}$. Consequently, $C_{Res} \cap (C_{Tor})^{\perp_S} \subseteq C_{Res} \cap (C_{Res})^{\perp_S}$. Thus, $(SHull(C))_{Res} \subseteq SHull(C_{Res})$. Further, $C_{Res} \subseteq C_{Tor}$ implies that $(C_{Res})^{\perp_S} \cap C_{Res} \subseteq (C_{Res})^{\perp_S} \cap C_{Tor}$. Thus, $(SHull(C))_{Tor} \supseteq SHull(C_{Tor})$. Next, if $C$ is free, then the freeness of the code $C$ implies that
$$(SHull(C))_{Res}=C_{Res} \cap (C_{Tor})^{\perp_S}=C_{Res} \cap (C_{Res})^{\perp_S}=SHull(C_{Res}),$$
and
     $$(SHull(C))_{Tor}=(C_{Res})^{\perp_S} \cap C_{Tor}=(C_{Tor})^{\perp_S} \cap C_{Tor}=SHull(C_{Tor}).$$

\end{proof}

For an $\mathbb{F}_q$-linear code $C$, $(C^{\perp_S})^\perp=C$. Similarly, we study the following results for the three symplectic duals of an $E$-linear code.

\begin{theorem}\label{Thm4e}
   An $E$-linear code $C$ of length $2n$ satisfies the following duality properties:
    \begin{enumerate}
        \item If $C$ is free,  $(C^{\perp_{S_L}})^{\perp_{S_L}}=C$;
        \item If $C_{Res}=\{0\}$ and $C_{Tor}=\mathbb{F}_2^{2n}$, $(C^{\perp_{S_R}})^{\perp_{S_R}}=C$;
        \item $(C^{\perp_S})^{\perp_S}=C$.

        \end{enumerate}
\end{theorem}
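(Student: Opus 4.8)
The plan is to compute each double dual explicitly by invoking the structural decompositions of Corollary \ref{Cor3} twice in succession, and then to recognize the outcome as $C$ via the decomposition $C = \kappa C_{Res} \oplus \zeta C_{Tor}$ of Theorem \ref{Thm1c}. The single external fact I will lean on is that over $\mathbb{F}_2$ the symplectic form is nondegenerate, so that $\bigl((D)^{\perp_S}\bigr)^{\perp_S} = D$ for every $\mathbb{F}_2$-linear code $D$ of length $2n$. In all three parts the argument reduces to this classical biduality once the $\kappa$- and $\zeta$-layers are peeled off, the only variation being which $\mathbb{F}_2$-codes occupy the two layers of $C^{\perp_{S_L}}$, $C^{\perp_{S_R}}$, and $C^{\perp_S}$.

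I would treat part (3) first, since it is unconditional. Setting $D = C^{\perp_S}$, Theorem \ref{Thm1b}(3) gives $D_{Res} = (C_{Tor})^{\perp_S}$ and $D_{Tor} = (C_{Res})^{\perp_S}$. Applying Corollary \ref{Cor3}(3) to $D$ yields $(C^{\perp_S})^{\perp_S} = \kappa (D_{Tor})^{\perp_S} \oplus \zeta (D_{Res})^{\perp_S}$, and then $\mathbb{F}_2$-biduality gives $(D_{Tor})^{\perp_S} = \bigl((C_{Res})^{\perp_S}\bigr)^{\perp_S} = C_{Res}$ and $(D_{Res})^{\perp_S} = \bigl((C_{Tor})^{\perp_S}\bigr)^{\perp_S} = C_{Tor}$, so the double dual collapses to $\kappa C_{Res} \oplus \zeta C_{Tor} = C$. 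The conceptual point is that the two-sided dual interchanges the residue and torsion codes, and a second application interchanges them back, so no hypothesis on $C$ is required.

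For part (1) I would run the identical two-step computation with Corollary \ref{Cor3}(1): writing $D = C^{\perp_{S_L}}$, both $D_{Res}$ and $D_{Tor}$ equal $(C_{Res})^{\perp_S}$, and a second application gives $(C^{\perp_{S_L}})^{\perp_{S_L}} = \kappa C_{Res} \oplus \zeta C_{Res}$. This matches $C = \kappa C_{Res} \oplus \zeta C_{Tor}$ exactly when $C_{Res} = C_{Tor}$, that is, precisely when $C$ is free, which explains the hypothesis. Part (2) is the mirror situation: by Theorem \ref{Thm1b}(2) the torsion code of any right dual is the full space $\mathbb{F}_2^{2n}$, so Corollary \ref{Cor3}(2) forces $(C^{\perp_{S_R}})^{\perp_{S_R}} = \kappa (\mathbb{F}_2^{2n})^{\perp_S} \oplus \zeta \mathbb{F}_2^{2n} = \zeta \mathbb{F}_2^{2n}$ for \emph{every} $C$; this coincides with $C = \kappa C_{Res} \oplus \zeta C_{Tor}$ if and only if $C_{Res} = \{0\}$ and $C_{Tor} = \mathbb{F}_2^{2n}$, which is exactly the stated hypothesis.

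I do not anticipate a genuine obstacle: granted the decompositions of Corollary \ref{Cor3}, each part is a two-line substitution, and the only care needed is bookkeeping of which $\mathbb{F}_2$-code sits in the $\kappa$-component versus the $\zeta$-component. The most error-prone step is confirming that the hypotheses in parts (1) and (2) are precisely the conditions under which the always-computable double dual equals $C$ rather than a canonically associated free code (respectively, the ``torsion-full'' code $\zeta \mathbb{F}_2^{2n}$); I would make this transparent by comparing the computed double dual layer-by-layer against $\kappa C_{Res} \oplus \zeta C_{Tor}$.
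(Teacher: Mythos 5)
Your proposal is correct and follows essentially the same route as the paper: apply Corollary \ref{Cor3} twice, use Theorem \ref{Thm1b} to identify the residue and torsion codes of the first dual, invoke $\mathbb{F}_2$-symplectic biduality, and compare the result layer-by-layer with $C=\kappa C_{Res}\oplus\zeta C_{Tor}$. Your part (2) computation $\kappa(\mathbb{F}_2^{2n})^{\perp_S}\oplus\zeta\mathbb{F}_2^{2n}=\zeta\mathbb{F}_2^{2n}$ is in fact written more carefully than the paper's intermediate step, which contains a small typographical slip.
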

\begin{proof}
    \begin{enumerate}
        \item For an $E$-linear code $C$, Corollary \ref{Cor3}(1) implies that
\begin{align*}
    (C^{\perp_{S_L}})^{\perp_{S_L}} &=\kappa ((C^{\perp_{S_L}})_{Res})^{\perp_S} \oplus \zeta ((C^{\perp_{S_L}})_{Res})^{\perp_S} \\
    &=\kappa ((C_{Res})^{\perp_S})^{\perp_S} \oplus \zeta ((C_{Res})^{\perp_S})^{\perp_S} \\
    &=\kappa C_{Res} \oplus \zeta C_{Res}.
    \end{align*}
This shows that $((C^{\perp_{S_L}})^{\perp_{S_L}})_{Res}=C_{Res}=((C^{\perp_{S_L}})^{\perp_{S_L}})_{Tor}$. Then the freeness of $C$ implies that $(C^{\perp_{S_L}})^{\perp_{S_L}}=C$.

        \item By Corollary \ref{Cor3}(2), we have

         $$(C^{\perp_{S_R}})^{\perp_{S_R}} =\kappa ((C^{\perp_{S_R}})_{Tor})^{\perp_S} \oplus \zeta \mathbb{F}_2^{2n}=\kappa \{0\}^{\perp_S} \oplus \zeta \mathbb{F}_2^{2n}=\zeta \mathbb{F}_2^{2n}.$$
         This implies that $((C^{\perp_{S_R}})^{\perp_{S_R}})_{Res}=\{0\}$ and $((C^{\perp_{S_R}})^{\perp_{S_R}})_{Tor}=\mathbb{F}_2^{2n}$.
%This completes the proof.

        \item For an $E$-linear code $C$, Corollary \ref{Cor3}(3), we have
\begin{align*}
    (C^{\perp_S})^{\perp_{\perp_S}} &=\kappa ((C^{\perp_S})_{Tor})^{\perp_S} \oplus \zeta ((C^{\perp_S})_{Res})^{\perp_S} \\
    &=\kappa ((C_{Res})^{\perp_S})^{\perp_S} \oplus \zeta ((C_{Tor})^{\perp_S})^{\perp_S} \\
    &=\kappa C_{Res} \oplus \zeta C_{Tor} \\
    &=C.
    \end{align*}

    \end{enumerate}
\end{proof}

Here, we investigate the equality of left and two-sided symplectic duals of an $E$-linear code.
\begin{theorem}\label{Thm3b}
    For an $E$-linear code $C$ of length $2n$, its left and two-sided symplectic duals are identical if and only if $C$ is left symplectic nice. Equivalently, these two symplectic duals are equal if and only if $C$ is free.
\end{theorem}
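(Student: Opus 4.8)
The plan is to reduce both assertions to the freeness criterion from Section 2 by leaning on the explicit descriptions of the duals supplied by Corollary \ref{Cor3}, handling the structural equality $C^{\perp_{S_L}} = C^{\perp_S}$ and the cardinality condition (left symplectic nice) as two separate computations that each pin down freeness. First I would invoke Corollary \ref{Cor3}(1) and (3) to write
$$C^{\perp_{S_L}}=\kappa (C_{Res})^{\perp_S} \oplus \zeta (C_{Res})^{\perp_S} \qquad \text{and} \qquad C^{\perp_S}=\kappa (C_{Tor})^{\perp_S} \oplus \zeta (C_{Res})^{\perp_S}.$$
Since the two $\zeta$-components coincide verbatim, the uniqueness of the direct-sum decomposition forces the equality $C^{\perp_{S_L}} = C^{\perp_S}$ to be equivalent to the single identity $(C_{Res})^{\perp_S} = (C_{Tor})^{\perp_S}$ of $\mathbb{F}_2$-linear codes. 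Because the symplectic form on $\mathbb{F}_2^{2n}$ is non-degenerate, the map $D \mapsto D^{\perp_S}$ is an involution, so applying it once more yields $C_{Res} = C_{Tor}$, which is exactly the condition $k_2 = 0$ characterizing freeness. This establishes the second ("equivalently") assertion.

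For the left symplectic nice condition I would argue by counting. Using Theorem \ref{Thm1c} together with the fact that multiplication by $\kappa$ and by $\zeta$ are injective on $\mathbb{F}_2$-codes (as $\kappa\cdot 1=\kappa\neq 0$ and $\zeta\cdot 1=\zeta\neq 0$), the decomposition $C=\kappa C_{Res}\oplus \zeta C_{Tor}$ gives $|C|=2^{k_1}\cdot 2^{k_1+k_2}=2^{2k_1+k_2}$. From the first display above and $\dim (C_{Res})^{\perp_S}=2n-k_1$, the same reasoning yields $|C^{\perp_{S_L}}|=2^{(2n-k_1)}\cdot 2^{(2n-k_1)}=2^{4n-2k_1}$. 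Multiplying, $|C|\,|C^{\perp_{S_L}}|=2^{4n+k_2}$, and since $|E|^{2n}=4^{2n}=2^{4n}$, the left symplectic nice equality $|C|\,|C^{\perp_{S_L}}|=|E|^{2n}$ holds precisely when $k_2=0$, i.e. when $C$ is free. Combining this with the previous paragraph chains the three conditions together and completes the argument.

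The underlying steps are largely routine bookkeeping, so I expect the only points genuinely requiring care to be the involution property of the $\mathbb{F}_2$-symplectic dual, which is what licenses passing from $(C_{Res})^{\perp_S}=(C_{Tor})^{\perp_S}$ back to $C_{Res}=C_{Tor}$, and the consistency of the cardinality exponents, in particular recording $\dim (C_{Res})^{\perp_S}=2n-k_1$ and the two injectivity facts that make the direct-sum factorizations of $|C|$ and $|C^{\perp_{S_L}}|$ multiplicative.
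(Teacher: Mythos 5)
Your proposal is correct and follows essentially the same route as the paper: both rest on the decompositions of $C^{\perp_{S_L}}$ and $C^{\perp_S}$ from Corollary \ref{Cor3}, the cardinality count $|C|\,|C^{\perp_{S_L}}|=4^{2n}2^{k_2}$ showing that left symplectic niceness is equivalent to $k_2=0$, and the double-dual property of the $\mathbb{F}_2$-symplectic dual to pass from $(C_{Res})^{\perp_S}=(C_{Tor})^{\perp_S}$ back to $C_{Res}=C_{Tor}$. The only difference is organizational (you prove ``duals equal iff free'' and ``nice iff free'' as two separate equivalences and then chain them), which is immaterial.
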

\begin{proof}
    If $C$ is an $E$-linear code of length $2n$, then $C^{\perp_{S_L}}=\kappa (C_{Res})^{\perp_S} \oplus \zeta (C_{Res})^{\perp_S}$ implies that $$|C^{\perp_{S_L}}|=|(C_{Res})^{\perp_S}| \cdot|(C_{Res})^{\perp_S}|=(2^{2n-k_1})^2=2^{4n-2k_1}.$$
    Hence, \begin{align}
        |C|\cdot|C^{\perp_{S_L}}|=2^{2k_1+k_2} \cdot 2^{4n-2k_1}=4^{2n}2^{k_2}.
        \end{align}
    This shows that if $C$ is left symplectic nice, then $k_2=0$. Hence, $C$ is free. Thus,
    $$C^{\perp_{S_L}}=\kappa (C_{Res})^{\perp_S} \oplus \zeta (C_{Res})^{\perp_S}=\kappa (C_{Tor})^{\perp_S} \oplus \zeta (C_{Res})^{\perp_S}=C^{\perp_S}.$$
    Conversely, if $C^{\perp_{S_L}}=C^{\perp_S}$, then $(C_{Res})^{\perp_S}=(C_{Tor})^{\perp_S}$ implies that $C_{Res}=C_{Tor}$. Hence, $C$ is free and $k_2=0$. Thus, $C$ is left symplectic nice by Equation (1). \par
    Moreover, from the above discussions, it can be concluded that $C$ is left symplectic nice if and only if it is free. Then the second argument is immediate.
\end{proof}

 We now have the following analogous condition for the right symplectic dual.
\begin{theorem}\label{Thm3c}
   The right and two-sided symplectic duals of an $E$-linear code $C$ of length $2n$ are equal if and only if $C$ is right symplectic nice.
\end{theorem}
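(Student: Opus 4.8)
The plan is to mirror the strategy of Theorem \ref{Thm3b}, reducing both the ``right symplectic nice'' condition and the equality $C^{\perp_{S_R}}=C^{\perp_S}$ to the single condition $C_{Res}=\{0\}$, and then conclude that the two are equivalent.

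First I would establish a cardinality identity. Using Corollary \ref{Cor3}(2), namely $C^{\perp_{S_R}}=\kappa (C_{Tor})^{\perp_S}\oplus\zeta\,\mathbb{F}_2^{2n}$, together with $\dim(C_{Tor})=k_1+k_2$, one gets $|(C_{Tor})^{\perp_S}|=2^{2n-k_1-k_2}$ and hence
\[
|C^{\perp_{S_R}}|=2^{2n-k_1-k_2}\cdot 2^{2n}=2^{4n-k_1-k_2}.
\]
Combining this with $|C|=2^{2k_1+k_2}$ (as in the computation preceding Equation (1)) yields $|C|\cdot|C^{\perp_{S_R}}|=2^{4n+k_1}$. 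Since $|E|^{2n}=4^{2n}=2^{4n}$, the code $C$ is right symplectic nice precisely when $k_1=0$, that is $\dim(C_{Res})=0$, i.e.\ $C_{Res}=\{0\}$.

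Next I would compare the two duals directly. By Corollaries \ref{Cor3}(2) and \ref{Cor3}(3), the codes $C^{\perp_{S_R}}$ and $C^{\perp_S}$ share the same $\kappa$-component $\kappa (C_{Tor})^{\perp_S}$, so—because the decomposition $\kappa(\cdot)\oplus\zeta(\cdot)$ into residue and torsion parts is canonical by Theorem \ref{Thm1c}—the equality $C^{\perp_{S_R}}=C^{\perp_S}$ holds if and only if the $\zeta$-components agree, i.e.\ $\zeta\,\mathbb{F}_2^{2n}=\zeta (C_{Res})^{\perp_S}$, equivalently $(C_{Res})^{\perp_S}=\mathbb{F}_2^{2n}$. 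Since the symplectic form on $\mathbb{F}_2^{2n}$ is nondegenerate, $(C_{Res})^{\perp_S}=\mathbb{F}_2^{2n}$ forces $C_{Res}=\{0\}$, while conversely $C_{Res}=\{0\}$ gives $(C_{Res})^{\perp_S}=\mathbb{F}_2^{2n}$.

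Combining the two reductions, both ``right symplectic nice'' and $C^{\perp_{S_R}}=C^{\perp_S}$ are equivalent to $C_{Res}=\{0\}$, which completes the argument. The one point requiring care is the reduction of code equality to equality of the $\zeta$-components once the $\kappa$-components coincide; this rests on the canonical nature of the residue/torsion decomposition in Theorem \ref{Thm1c}, and on the nondegeneracy of the symplectic pairing to pass from $(C_{Res})^{\perp_S}=\mathbb{F}_2^{2n}$ back to $C_{Res}=\{0\}$. Everything else is a direct size count, so I expect no substantial obstacle beyond bookkeeping.
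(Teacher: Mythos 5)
Your proposal is correct and follows essentially the same route as the paper: both arguments use the cardinality count $|C|\cdot|C^{\perp_{S_R}}|=2^{4n+k_1}$ from Corollary \ref{Cor3}(2) to identify right symplectic niceness with $C_{Res}=\{0\}$, and then compare the decompositions of $C^{\perp_{S_R}}$ and $C^{\perp_S}$ from Corollary \ref{Cor3} to see that their equality is likewise equivalent to $(C_{Res})^{\perp_S}=\mathbb{F}_2^{2n}$, i.e.\ $C_{Res}=\{0\}$. Your explicit appeal to the canonical nature of the $\kappa/\zeta$-decomposition and to nondegeneracy of the symplectic form only spells out steps the paper leaves implicit.
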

\begin{proof}
    We know that $C^{\perp_{S_R}}=\kappa (C_{Tor})^{\perp_S} \oplus \zeta \mathbb{F}_2^{2n}$. Then

        $$|C^{\perp_{S_R}}|  = |(C_{Tor})^{\perp_S}| \cdot|\mathbb{F}_2^{2n}| =2^{2n-dim(C_{Tor})} \cdot 2^{2n} =2^{2n-(k_1+k_2)} \cdot 2^{2n} =2^{4n-(k_1+k_2)}.$$

    Therefore, $$|C|\cdot|C^{\perp_{S_R}}|=2^{2k_1+k_2} \cdot 2^{4n-(k_1+k_2)}=4^{2n}2^{k_1}.$$
    This shows that $C$ is right symplectic nice if and only if $k_1=0$, i.e., $C_{Res}=\{0\}.$ Further, if $C$ is right symplectic nice, then

       $$ C^{\perp_S}=\kappa (C_{Tor})^{\perp_S} \oplus \zeta (C_{Res})^{\perp_S}=\kappa (C_{Tor})^{\perp_S} \oplus \zeta \mathbb{F}_2^{2n}=C^{\perp_{S_R}}.$$
       Conversely, if $C^{\perp_{S_R}}=C^{\perp_S}$, then $(C_{Res})^{\perp_S}=\mathbb{F}_2^{2n}$. This implies that $C_{Res}=\{0\}$, and hence $C$ is right symplectic nice.

\end{proof}

\begin{theorem}
    A non-zero $E$-linear code can never be both left and right symplectic nice.
\end{theorem}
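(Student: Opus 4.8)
The plan is to derive this statement directly from the two preceding niceness characterizations together with the cardinality formula for an $E$-linear code. First I would recall that, by the size computation carried out in the proof of Theorem \ref{Thm3b}, the left symplectic niceness of $C$ forces $k_2=0$, while by the analogous computation in Theorem \ref{Thm3c} the right symplectic niceness of $C$ forces $k_1=0$. Thus the strategy is to assume, for contradiction, that a single non-zero code $C$ is simultaneously left and right symplectic nice, and then to show that this pins both structural invariants $k_1$ and $k_2$ to zero.

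Next I would invoke the cardinality formula $|C|=2^{2k_1+k_2}$, which follows from the direct-sum decomposition $C=\kappa C_{Res}\oplus\zeta C_{Tor}$ of Theorem \ref{Thm1c} together with $\dim(C_{Res})=k_1$ and $\dim(C_{Tor})=k_1+k_2$. Substituting $k_1=k_2=0$ into this formula yields $|C|=2^0=1$, so that $C=\{\boldsymbol{0}\}$. This contradicts the hypothesis that $C$ is non-zero, which completes the argument.

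There is essentially no analytic obstacle here: the content of the result is entirely encoded in the observation that the two niceness conditions constrain complementary invariants, namely $k_2$ on the left and $k_1$ on the right, and that the only code annihilating both invariants simultaneously is the trivial one. The single point that merits care is keeping the bookkeeping of $k_1$ and $k_2$ consistent with the conventions fixed in the Remark, namely $k_1=\dim(C_{Res})$ and $k_2=\dim(C_{Tor})-k_1$, so that the vanishing of both invariants correctly translates into $C_{Res}=C_{Tor}=\{\boldsymbol{0}\}$ and hence, via the decomposition in Theorem \ref{Thm1c}, into the triviality of $C$.
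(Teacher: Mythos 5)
Your proposal is correct and follows essentially the same route as the paper: both arguments extract $k_2=0$ from left niceness and $k_1=0$ from right niceness via the cardinality identities $|C||C^{\perp_{S_L}}|=4^{2n}2^{k_2}$ and $|C||C^{\perp_{S_R}}|=4^{2n}2^{k_1}$, and then conclude $C=\{\boldsymbol{0}\}$. The only cosmetic difference is that you finish with the formula $|C|=2^{2k_1+k_2}=1$, whereas the paper notes $C_{Res}=C_{Tor}=\{0\}$ and appeals to the decomposition $C=\kappa C_{Res}\oplus\zeta C_{Tor}$; these are interchangeable.
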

\begin{proof}
    Given an $E$-linear code $C$ of length $2n$, Theorems \ref{Thm3b} and \ref{Thm3c} imply that
     $$|C|\cdot|C^{\perp_{S_L}}|=4^{2n}2^{k_2}~~~\text{and}~~~|C|\cdot|C^{\perp_{S_R}}|=4^{2n}2^{k_1}.$$
     Consequently, if $C$ is both left and right symplectic nice, then $k_1=k_2=0$. Therefore, $C_{Res}=C_{Tor}=\{0\}$, and hence $C=\{0\}$. Thus, any non-zero $E$-linear $C$ cannot be both left and right symplectic nice simultaneously.

\end{proof}

\begin{theorem}
    Every $E$-linear code is two-sided symplectic nice.
\end{theorem}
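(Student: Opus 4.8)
The plan is to compute both $|C|$ and $|C^{\perp_S}|$ explicitly as powers of $2$ in terms of $n$, $k_1$ and $k_2$, and then verify that their product is $4^{2n}=|E|^{2n}$. All the structural ingredients are already available: the $\zeta$-adic decomposition from Theorem \ref{Thm1c} controls the size of $C$, while Corollary \ref{Cor3}(3) gives the analogous decomposition of $C^{\perp_S}$, reducing the whole computation to the sizes of two ordinary $\mathbb{F}_2$-symplectic duals. First I would record $|C|$: by Theorem \ref{Thm1c}, $C=\kappa C_{Res}\oplus\zeta C_{Tor}$, and since this is an internal direct sum of $\mathbb{F}_2$-spaces,
\[
|C|=|C_{Res}|\cdot|C_{Tor}|=2^{k_1}\cdot 2^{k_1+k_2}=2^{2k_1+k_2},
\]
using $k_1=\dim(C_{Res})$ and $k_1+k_2=\dim(C_{Tor})$.

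Next I would compute $|C^{\perp_S}|$ from Corollary \ref{Cor3}(3), which gives $C^{\perp_S}=\kappa (C_{Tor})^{\perp_S}\oplus\zeta (C_{Res})^{\perp_S}$. Again this is a direct sum, so $|C^{\perp_S}|=|(C_{Tor})^{\perp_S}|\cdot|(C_{Res})^{\perp_S}|$, where $(C_{Tor})^{\perp_S}$ and $(C_{Res})^{\perp_S}$ are the symplectic duals, taken over $\mathbb{F}_2$, of the length-$2n$ binary codes $C_{Tor}$ and $C_{Res}$. The one non-formal input I need is that the symplectic form on $\mathbb{F}_2^{2n}$ is non-degenerate, since its Gram matrix $\Omega_{2n}$ is invertible; hence for any binary code $D$ of length $2n$ one has $\dim D^{\perp_S}=2n-\dim D$. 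Applying this yields $|(C_{Tor})^{\perp_S}|=2^{2n-(k_1+k_2)}$ and $|(C_{Res})^{\perp_S}|=2^{2n-k_1}$, so that $|C^{\perp_S}|=2^{4n-2k_1-k_2}$.

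Finally, multiplying the two cardinalities gives
\[
|C|\cdot|C^{\perp_S}|=2^{2k_1+k_2}\cdot 2^{4n-2k_1-k_2}=2^{4n}=4^{2n}=|E|^{2n},
\]
which is exactly the two-sided symplectic nice condition. I do not anticipate a genuine obstacle here: the argument is essentially bookkeeping of exponents, and the only substantive fact, the non-degeneracy of $\Omega_{2n}$ over $\mathbb{F}_2$, is standard. The single point worth verifying carefully is that the $\kappa$- and $\zeta$-components in Corollary \ref{Cor3}(3) contribute independently to the count, i.e.\ that the sum is genuinely direct, which is guaranteed by Theorem \ref{Thm1c}. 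It is also illuminating to contrast this with Theorems \ref{Thm3b} and \ref{Thm3c}: there the one-sided products were $4^{2n}2^{k_2}$ and $4^{2n}2^{k_1}$, each carrying a nontrivial defect factor, whereas in the two-sided case the defects $2^{k_2}$ and $2^{k_1}$ cancel against one another, so that two-sided niceness holds unconditionally.
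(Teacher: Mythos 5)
Your argument is correct and is essentially the paper's own proof: the paper likewise writes $|C|\cdot|C^{\perp_S}|=|C_{Res}|\,|C_{Tor}|\,|(C_{Tor})^{\perp_S}|\,|(C_{Res})^{\perp_S}|=4^{2n}$ using Theorem \ref{Thm1c} and Corollary \ref{Cor3}(3), with the non-degeneracy of the symplectic form supplying the dimension count. You have merely made the exponent bookkeeping explicit, which the paper leaves implicit.
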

\begin{proof}
    Given an $E$-linear code $C$ of length $2n$, we have
  $$|C||C^{\perp_S}|=|C_{Res}||C_{Tor}||(C_{Tor})^{\perp_S}||(C_{Res})^{\perp_S}|=4^{2n}.$$
    Thus, $C$ is always symplectic nice.
\end{proof}

We know that $SHull(C^{\perp_S})=SHull(C)$ if $C$ is an $\mathbb{F}_q$-linear code. Correspondingly, we investigate this for the three symplectic hulls of an $E$-linear code.

\begin{theorem}
    An $E$-linear code $C$ of length $2n$ satisfies the following properties:
    \begin{enumerate}
        \item If $C$ is free, then $LSHull(C^{\perp_{S_L}})=LSHull(C)$;
        \item If $C_{Res} \cap (C_{Tor})^{\perp_S}=\{0\}$ and $C_{Tor}=\mathbb{F}_2^{2n}$, then $RSHull(C^{\perp_{S_R}})=RSHull(C)$;
        \item  $SHull(C^{\perp_S})=SHull(C)$.

\end{enumerate}
\end{theorem}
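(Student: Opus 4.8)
The plan is to read each hull off from a double-dual identity: for any of the three notions, the hull of a code $D$ is by definition the intersection of $D$ with its own (left, right, or two-sided) dual, so once the relevant double dual of $C$ is known, the hull of that dual is immediate. I would therefore combine Theorem \ref{Thm4e} with the explicit descriptions in Corollary \ref{Cor3} and Theorem \ref{Thm2}, using the reconstruction $D = \kappa D_{Res} \oplus \zeta D_{Tor}$ from Theorem \ref{Thm1c} whenever I need to pass between a code and its residue/torsion pair.

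Parts (1) and (3) are the easy, parallel cases. For (3), $SHull(C^{\perp_S}) = C^{\perp_S} \cap (C^{\perp_S})^{\perp_S}$ by definition, and Theorem \ref{Thm4e}(3) gives $(C^{\perp_S})^{\perp_S} = C$ unconditionally, so the intersection collapses to $C^{\perp_S} \cap C = SHull(C)$. For (1), $LSHull(C^{\perp_{S_L}}) = C^{\perp_{S_L}} \cap (C^{\perp_{S_L}})^{\perp_{S_L}}$, and the freeness of $C$ activates Theorem \ref{Thm4e}(1) to give $(C^{\perp_{S_L}})^{\perp_{S_L}} = C$, whence the same collapse yields $LSHull(C)$. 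In both cases the hypothesis (two-sidedness, resp. freeness) is exactly what guarantees the double dual returns $C$.

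Part (2) is the genuinely different one and I expect it to be the main obstacle, since the right double dual does \emph{not} return $C$. Indeed, $(C^{\perp_{S_R}})_{Tor} = \mathbb{F}_2^{2n}$ by Theorem \ref{Thm1b}(2), so Corollary \ref{Cor3}(2) forces $(C^{\perp_{S_R}})^{\perp_{S_R}} = \kappa (\mathbb{F}_2^{2n})^{\perp_S} \oplus \zeta \mathbb{F}_2^{2n} = \zeta \mathbb{F}_2^{2n}$ regardless of $C$; the clean collapse of parts (1) and (3) is therefore unavailable. Instead I would pin both hulls down explicitly. Since $C^{\perp_{S_R}} = \kappa (C_{Tor})^{\perp_S} \oplus \zeta \mathbb{F}_2^{2n}$ contains $\zeta \mathbb{F}_2^{2n}$, I get $RSHull(C^{\perp_{S_R}}) = C^{\perp_{S_R}} \cap \zeta \mathbb{F}_2^{2n} = \zeta \mathbb{F}_2^{2n}$. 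For the other side I invoke Theorem \ref{Thm2}(2): the hypothesis $C_{Tor} = \mathbb{F}_2^{2n}$ gives $(RSHull(C))_{Tor} = C_{Tor} = \mathbb{F}_2^{2n}$, while $C_{Res} \cap (C_{Tor})^{\perp_S} = \{0\}$ gives $(RSHull(C))_{Res} = \{0\}$, so $RSHull(C) = \kappa \{0\} \oplus \zeta \mathbb{F}_2^{2n} = \zeta \mathbb{F}_2^{2n}$, matching.

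I would close with a remark that the two hypotheses in (2) are not independent: $C_{Tor} = \mathbb{F}_2^{2n}$ already forces $(C_{Tor})^{\perp_S} = \{0\}$ by non-degeneracy of the symplectic form, hence $C_{Res} \cap (C_{Tor})^{\perp_S} = \{0\}$ automatically, so the essential assumption is $C_{Tor} = \mathbb{F}_2^{2n}$. The only remaining verifications are the elementary module facts $(\mathbb{F}_2^{2n})^{\perp_S} = \{0\}$ and $\zeta \mathbb{F}_2^{2n} \subseteq \kappa (C_{Tor})^{\perp_S} \oplus \zeta \mathbb{F}_2^{2n}$, which I would dispose of in a single line.
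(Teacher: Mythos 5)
Your treatment of parts (1) and (3) is exactly the paper's: both reduce the claim to the double-dual identities of Theorem \ref{Thm4e}(1) and \ref{Thm4e}(3) and collapse $D\cap D^{\perp}$ accordingly. Part (2) is where you genuinely diverge, and your route is the sounder one. The paper also proves (2) by citing Theorem \ref{Thm4e}(2) to get $(C^{\perp_{S_R}})^{\perp_{S_R}}=C$, but that theorem's hypothesis is $C_{Res}=\{0\}$ together with $C_{Tor}=\mathbb{F}_2^{2n}$, whereas the present statement only assumes $C_{Res}\cap (C_{Tor})^{\perp_S}=\{0\}$ and $C_{Tor}=\mathbb{F}_2^{2n}$; as you observe, the right double dual is always $\zeta\mathbb{F}_2^{2n}$, which equals $C=\kappa C_{Res}\oplus\zeta\mathbb{F}_2^{2n}$ only when $C_{Res}=\{0\}$, a condition not implied by the stated hypotheses. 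Your workaround --- computing both sides explicitly via Corollary \ref{Cor3}(2), Theorem \ref{Thm1b}(2) and Theorem \ref{Thm2}(2) and showing each equals $\zeta\mathbb{F}_2^{2n}$ --- establishes the stated conclusion under the stated hypotheses and thereby repairs a gap in the paper's own argument. Your closing remark that $C_{Tor}=\mathbb{F}_2^{2n}$ already forces $C_{Res}\cap(C_{Tor})^{\perp_S}=\{0\}$ is also correct and worth recording, since it shows the first hypothesis of (2) is redundant. The only cost of your approach is that it is slightly longer and does not recover $(C^{\perp_{S_R}})^{\perp_{S_R}}=C$ (which is in fact false here in general); what it buys is a proof that actually matches the hypotheses as written.
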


\begin{proof}
\begin{enumerate}
    \item For a free $E$-linear code $C$,  Theorem \ref{Thm4e}(1) implies that  $(C^{\perp_{S_L}})^{\perp_{S_L}}=C$. Consequently, $$LSHull(C^{\perp_{S_L}})=C^{\perp_{S_L}} \cap (C^{\perp_{S_L}})^{\perp_{S_L}}=C^{\perp_{S_L}} \cap C=LSHull(C).$$

    \item Given an $E$-linear code $C$ of length $2n$ with $C_{Res} \cap (C_{Tor})^{\perp_S}=\{0\}$ and $C_{Tor}=\mathbb{F}_2^{2n}$, Theorem \ref{Thm4e}(2) implies that $(C^{\perp_{S_R}})^{\perp_{S_R}}=C$. Therefore,
    $$RSHull(C^{\perp_{S_R}})=C^{\perp_{S_R}} \cap (C^{\perp_{S_R}})^{\perp_{S_R}}=C \cap C^{\perp_{S_R}}=RSHull(C).$$

    \item For an $E$-linear code $C$,  $(C^{\perp_S})^{\perp_S}=C$ by Theorem \ref{Thm4e}(3). This shows that $$SHull(C^{\perp_S})= C^{\perp_S}\cap (C^{\perp_S})^{\perp_S}= C \cap C^{\perp_S}=SHull(C).$$
\end{enumerate}
\end{proof}

Next, we have the following result, which investigates the freeness of the hull codes of a free $E$-linear code.

\begin{theorem}\label{Thm3a}
 For a free $E$-linear code $C$, its symplectic hull code is also free.
\end{theorem}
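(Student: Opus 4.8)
The plan is to reduce the claim to the freeness criterion $D_{Res}=D_{Tor}$ (equivalently $k_2=0$) stated in the preliminaries, applied to the code $D=SHull(C)$. So the entire task amounts to showing that the residue and torsion codes of $SHull(C)$ coincide whenever $C$ is free.

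First I would invoke Theorem \ref{Thm2}(3), which already computes these two codes in general:
$$(SHull(C))_{Res}=C_{Res}\cap(C_{Tor})^{\perp_S}\quad\text{and}\quad (SHull(C))_{Tor}=(C_{Res})^{\perp_S}\cap C_{Tor}.$$
The key structural observation is that these two expressions are related by interchanging the roles of $C_{Res}$ and $C_{Tor}$. Hence any hypothesis forcing $C_{Res}=C_{Tor}$ will make them identical.

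Next I would use precisely the freeness of $C$, namely $C_{Res}=C_{Tor}$. Substituting this common code into both formulas above collapses them to the single expression $C_{Res}\cap(C_{Res})^{\perp_S}$, so that
$$(SHull(C))_{Res}=C_{Res}\cap(C_{Res})^{\perp_S}=(SHull(C))_{Tor}.$$
Since the residue and torsion codes agree, the criterion gives $k_2=0$ for $SHull(C)$, and therefore $SHull(C)$ is free. Alternatively, I could cite Corollary \ref{Cor1}, whose ``equality holds if $C$ is free'' clause yields $(SHull(C))_{Res}=SHull(C_{Res})$ and $(SHull(C))_{Tor}=SHull(C_{Tor})$, and then the freeness hypothesis $C_{Res}=C_{Tor}$ makes both sides equal to $SHull(C_{Res})$.

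There is no real obstacle here: the proof is a one-line substitution once Theorem \ref{Thm2}(3) is in hand. The only point requiring any care is conceptual rather than computational, namely recognizing that the general residue/torsion formulas for the two-sided hull are symmetric under swapping $C_{Res}\leftrightarrow C_{Tor}$, so that the free condition (which identifies these two codes) is exactly what is needed to force the hull's residue and torsion codes to coincide.
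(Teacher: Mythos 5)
Your proof is correct and follows exactly the paper's own argument: apply Theorem \ref{Thm2}(3) to express $(SHull(C))_{Res}$ and $(SHull(C))_{Tor}$, then substitute $C_{Res}=C_{Tor}$ to see both equal $C_{Res}\cap(C_{Res})^{\perp_S}$, whence $SHull(C)$ is free. No gaps; the alternative route via Corollary \ref{Cor1} is a valid minor variant of the same idea.
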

\begin{proof}
    If $C$ is a free $E$-linear code, then by (3) of Theorem \ref{Thm2}, we have
  $$(SHull(C))_{Res}=C_{Res} \cap (C_{Tor})^{\perp_S},~~~\text{and}~~~(SHull(C))_{Tor}=(C_{Res})^{\perp_S} \cap C_{Tor}.$$
Since $C$ is free, $C_{Res}=C_{Tor}$. Therefore,
    $(SHull(C))_{Res}=(SHull(C))_{Tor}.$
 Thus, the symplectic hull of the free $E$-code $C$ is also free.
 \end{proof}

For a free $E$-linear code, we have the following example, which shows that its right symplectic hull code may not be free.

\begin{example}
  If $C$ is an $E$-linear code with generator matrix
  $$G=\begin{pmatrix}
     \kappa & 0 & 0 & 0\\
      0  & 0  & \kappa & 0
  \end{pmatrix},$$
  then $C$ is free, as $C_{Res}$ and $C_{Tor}$ both are generated by the matrix
  $$G_1=\begin{pmatrix}
      1 & 0 & 0& 0\\
      0 & 0 & 1 & 0
  \end{pmatrix}.$$
 Next, the symplectic dual $(C_{Res})^{\perp_S}$ is generated by the matrix
  $$H_1=\begin{pmatrix}
      0 & 1 & 0& 0\\
      0 & 0 & 0 & 1
  \end{pmatrix}.$$
  Clearly, $C_{Res} \cap (C_{Res})^{\perp_S}=\{0\}$. Therefore, from Theorem \ref{Thm2}(2), we have $$(RSHull(C))_{Res}=C_{Res} \cap (C_{Tor})^{\perp_S}=C_{Res} \cap (C_{Res})^{\perp_S}=\{0\},$$
  and $$ (RSHull(C))_{Tor}=C_{Tor}.$$
  Therefore, $(RSHull(C))_{Res} \neq (RSHull(C))_{Tor}$. Thus, $RSHull(C)$ is not free.
\end{example}

 The following result is crucial in our further investigations on symplectic hulls.
\begin{theorem}
   For a free $E$-linear code $C$, $LSHull(C)=SHull(C)$.
\end{theorem}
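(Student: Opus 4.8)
The plan is to reduce the asserted identity to a comparison of residue and torsion codes, exploiting the fact that by Theorem \ref{Thm1c} every $E$-linear code $D$ is recovered from these two data via $D = \kappa D_{Res} \oplus \zeta D_{Tor}$. Since both $LSHull(C)$ and $SHull(C)$ are $E$-submodules of $E^{2n}$ (each being an intersection of two $E$-linear codes), an $E$-linear code is determined by its residue and torsion codes, so it suffices to verify $(LSHull(C))_{Res} = (SHull(C))_{Res}$ and $(LSHull(C))_{Tor} = (SHull(C))_{Tor}$.

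First I would simply read off the four relevant codes from Theorem \ref{Thm2}. Parts (1) and (3) give
$$(LSHull(C))_{Tor} = (C_{Res})^{\perp_S} \cap C_{Tor} = (SHull(C))_{Tor},$$
so the torsion codes already coincide for \emph{every} $E$-linear code, with no hypothesis required. For the residue codes the same theorem yields $(LSHull(C))_{Res} = C_{Res} \cap (C_{Res})^{\perp_S}$ while $(SHull(C))_{Res} = C_{Res} \cap (C_{Tor})^{\perp_S}$.

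This is the only place where freeness enters. Since $C$ is free we have $C_{Res} = C_{Tor}$, hence $(C_{Res})^{\perp_S} = (C_{Tor})^{\perp_S}$, and the two residue codes agree. With both residue and torsion codes matching, the decomposition of Theorem \ref{Thm1c} forces $LSHull(C) = SHull(C)$.

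I do not anticipate a genuine obstacle here: the entire argument is bookkeeping built on Theorem \ref{Thm2} and the structure theorem. The only point deserving care is to notice that freeness is used purely through the residue parts, the torsion parts being equal unconditionally. As an alternative packaging one could observe that $SHull(C) \subseteq LSHull(C)$ holds trivially, because $C^{\perp_S} \subseteq C^{\perp_{S_L}}$, and then close the reverse inclusion by a cardinality count: both hulls are free (by Theorem \ref{Thm3a} for $SHull(C)$, and by the identical computation $C_{Res} \cap (C_{Res})^{\perp_S} = (C_{Res})^{\perp_S} \cap C_{Tor}$ for $LSHull(C)$), so equal residue dimensions give equal cardinalities and the inclusion is an equality.
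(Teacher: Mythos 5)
Your argument is correct, but it takes a different route from the paper's. The paper proves this theorem in one line by invoking Theorem \ref{Thm3b}: for a free code the left and two-sided symplectic \emph{duals} already coincide, $C^{\perp_{S_L}}=C^{\perp_S}$, so intersecting each with $C$ immediately gives equal hulls. You instead work one level down, at the hulls themselves, comparing their residue and torsion codes via Theorem \ref{Thm2} and then reassembling with the decomposition $D=\kappa D_{Res}\oplus\zeta D_{Tor}$ of Theorem \ref{Thm1c}. Both arguments ultimately rest on the same fact (freeness gives $C_{Res}=C_{Tor}$, hence $(C_{Res})^{\perp_S}=(C_{Tor})^{\perp_S}$), and both are pure bookkeeping, but they surface different information: the paper's route yields the stronger statement that the duals themselves agree, while your route isolates exactly where freeness is needed and shows as a byproduct that $(LSHull(C))_{Tor}=(SHull(C))_{Tor}$ holds for \emph{every} $E$-linear code, free or not --- an observation the paper does not record. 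Your alternative closing via the inclusion $SHull(C)\subseteq LSHull(C)$ plus a cardinality count is also sound, since $|D|=|D_{Res}|\cdot|D_{Tor}|$ follows from Theorem \ref{Thm1c}.
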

\begin{proof}
 By Theorem \ref{Thm3b}, if $C$ is a free $E$-linear code, then its left and two-sided symplectic duals are equal. Therefore, its left and two-sided symplectic hulls are also equal.
\end{proof}

By the preceding results, the left symplectic hull and the two-sided symplectic hull of a free $E$-linear code coincide. Additionally, they are free. However, there is no guarantee that the right symplectic hull is free. Consequently, we restrict our attention to the two-sided symplectic hull of a free $E$-linear code.\par

We now have the following result that characterizes a generator matrix of the symplectic hull of a free $E$-linear code.

\begin{theorem}\label{thm1a}
  Let $G$ be a generator matrix of the symplectic hull of the residue code of a free $E$-linear code $C$. Then the symplectic hull $SHull(C)$ has a generator matrix $\kappa G$.

\end{theorem}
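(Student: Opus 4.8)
The plan is to pin down the exact $\zeta$-adic shape of $SHull(C)$ for a free code and then verify that $\kappa G$ generates precisely that object. First I would record the structure of the target. Since $C$ is free, Theorem \ref{Thm3a} guarantees that $SHull(C)$ is itself free, and the free case of Corollary \ref{Cor1} gives $(SHull(C))_{Res}=(SHull(C))_{Tor}=SHull(C_{Res})$. Applying Theorem \ref{Thm1c} to the code $SHull(C)$ then yields
$$SHull(C)=\kappa\,SHull(C_{Res})\oplus\zeta\,SHull(C_{Res}),$$
so the goal becomes showing that the code generated by $\kappa G$ equals the additive group $\{\kappa a+\zeta b \mid a,b\in SHull(C_{Res})\}$.

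Next I would unwind the definition of the code $D$ generated by $\kappa G$, namely $D=\langle\{\kappa g_i\}\rangle_{E}\cup\langle\{\kappa g_i\}\rangle_{\mathbb{F}_2}$, where $g_1,\dots,g_m$ are the rows of $G$ and $\mathbb{F}_2$-span $SHull(C_{Res})$. The decisive simplification is the identity $e\kappa=e$ for all $e\in E$ from the preliminaries, which collapses $e_i(\kappa g_i)=e_i g_i$. Writing each coefficient in its $\zeta$-adic form $e_i=u_i\kappa+v_i\zeta$ with $u_i,v_i\in\mathbb{F}_2$, and using the characteristic-$2$ relations $\kappa+\kappa=\zeta+\zeta=0$, I would compute
$$\sum_i e_i g_i=\kappa\Big(\sum_i u_i g_i\Big)+\zeta\Big(\sum_i v_i g_i\Big),$$
with the inner sums taken in $\mathbb{F}_2^{2n}$. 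As the $e_i$ range over $E$, the pairs $(u_i,v_i)$ range over all of $\mathbb{F}_2^2$, and independently across $i$; hence $\sum_i u_i g_i$ and $\sum_i v_i g_i$ sweep out $SHull(C_{Res})$ independently. This gives $\langle\{\kappa g_i\}\rangle_{E}=\{\kappa a+\zeta b \mid a,b\in SHull(C_{Res})\}$.

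To finish, I would observe that the additive span $\langle\{\kappa g_i\}\rangle_{\mathbb{F}_2}=\kappa\,SHull(C_{Res})$ is just the $b=0$ slice of the $E$-span, hence already contained in it, so $D=\{\kappa a+\zeta b \mid a,b\in SHull(C_{Res})\}=SHull(C)$ by the first step. A rank count confirms consistency: $SHull(C)$ is free with residue code $SHull(C_{Res})$ of dimension $m$, matching the $m$ rows of $\kappa G$.

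I expect the main difficulty to be bookkeeping in the non-unital setting rather than any conceptual leap: one must interpret $\kappa g$ and $\zeta g$ for an $\mathbb{F}_2$-vector $g$ through the componentwise $\mathbb{F}_2$-action on $E$, keep the characteristic-$2$ cancellations straight, and—most importantly—justify that varying the $E$-coefficients $e_i$ forces the $\kappa$-part and the $\zeta$-part of the $E$-span to be populated \emph{independently}. This independence, which stems from the uniqueness of the $\zeta$-adic decomposition, is exactly what allows the single lift $\kappa G$ (with no $\zeta G$ block) to generate both summands of $SHull(C)$.
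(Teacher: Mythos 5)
Your proposal is correct, and its skeleton matches the paper's: both arguments first invoke Theorem \ref{Thm3a} to see that $SHull(C)$ is free and Corollary \ref{Cor1} to identify its residue code with $SHull(C_{Res})$. The difference lies in the last step. The paper finishes by citing Theorem 1 of the external reference \cite{Kushwaha2}, which states that for any free $E$-linear code, $\kappa$ times a generator matrix of its residue code is a generator matrix of the code itself; applied to the free code $SHull(C)$ with residue code $SHull(C_{Res})$, this gives the claim in one line. You instead re-derive that cited fact from first principles: you use Theorem \ref{Thm1c} to write $SHull(C)=\kappa\,SHull(C_{Res})\oplus\zeta\,SHull(C_{Res})$ and then unwind the definition of a generator matrix (the union of the $E$-span and the $\mathbb{F}_2$-span), exploiting $e\kappa=e$ and the uniqueness of the $\zeta$-adic decomposition to show the $\kappa$-part and $\zeta$-part are populated independently. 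Your computation is sound, and what it buys is self-containedness: the reader does not need to consult \cite{Kushwaha2}, and the mechanism by which a single block $\kappa G$ (with no $\zeta G$ block) generates both summands is made explicit. What the paper's route buys is brevity and reuse of an already-established structural lemma.
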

\begin{proof}
  If $C$ is a free $E$-code, then by Theorem  \ref{Thm3a}, its symplectic hull $SHull(C)$ is also free. In addition, $(SHull(C))_{Res}=SHull(C_{Res})$ from Corollary \ref{Cor1}. Further, Theorem $1$ of \cite{Kushwaha2} implies that if $G'$ is a generator matrix of the residue code of a free $E$-linear code $C$, then $\kappa G'$ is a generator matrix of $C$. Therefore, if $G$ is a generator matrix of the symplectic hull of $C_{Res}$, then $\kappa G$ is a generator matrix of $SHull(C)$.
\end{proof}

The next result is essential for classifying optimal free $E$-linear codes, as it calculates the symplectic hull-rank.
\begin{theorem}\label{thm3}
   For a free $E$-linear code $C$,
   $$rank(SHull(C))=dim(SHull(C_{Res})).$$
\end{theorem}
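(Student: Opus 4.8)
The plan is to combine three facts already in hand: the symplectic hull of a free code is itself free (Theorem \ref{Thm3a}), its residue code is exactly $SHull(C_{Res})$ (Corollary \ref{Cor1}), and $SHull(C)$ admits the explicit generator matrix $\kappa G$ with $G$ a generator matrix of $SHull(C_{Res})$ (Theorem \ref{thm1a}). The identity to be proved then amounts to the general principle that the rank of a free $E$-linear code equals the $\mathbb{F}_2$-dimension of its residue code.

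First I would note that since $C$ is free, Theorem \ref{Thm3a} makes $SHull(C)$ free, and Corollary \ref{Cor1} sharpens the generic inclusion to the equality $(SHull(C))_{Res}=SHull(C_{Res})$. Hence it suffices to show
$$rank(SHull(C))=dim\big((SHull(C))_{Res}\big).$$

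I would then prove this by two opposing inequalities. For $\le$, Theorem \ref{thm1a} exhibits the generator matrix $\kappa G$ of $SHull(C)$ whose rows number $dim(SHull(C_{Res}))$, giving a generating set of that size and hence $rank(SHull(C))\le dim(SHull(C_{Res}))$. For $\ge$, I would push any generating set $\{\alpha_1,\dots,\alpha_k\}$ of $SHull(C)$ through the reduction map $\pi$: because $\pi$ is multiplicative with image in $\mathbb{F}_2$ and the $\mathbb{F}_2$-action is distributive, $\pi$ carries both the $E$-span and the additive span into $\mathbb{F}_2$-linear combinations of $\pi(\alpha_1),\dots,\pi(\alpha_k)$, so these images span $(SHull(C))_{Res}$ and therefore $k\ge dim((SHull(C))_{Res})$. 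Combining the bounds yields the equality, and replacing $(SHull(C))_{Res}$ by $SHull(C_{Res})$ completes the argument.

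The only step requiring genuine care is the lower bound: one must confirm that reduction modulo $J$ sends an arbitrary generating set---assembled from the union of the $E$-span and the additive $\mathbb{F}_2$-span---onto a spanning set of the residue code. This rests on $\pi(e)\in\mathbb{F}_2$ for all $e\in E$ together with the distributivity of the $\mathbb{F}_2$-action recorded in the preliminaries. Everything else is a direct assembly of Theorem \ref{Thm3a}, Corollary \ref{Cor1}, and Theorem \ref{thm1a}, so I anticipate no deeper obstruction.
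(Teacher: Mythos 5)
Your argument is correct and follows the same route as the paper: the paper's proof simply invokes Theorem \ref{thm1a} together with the definition of the rank of a free $E$-linear code, which is exactly the content you assemble (your two-inequality verification that the rank of a free code equals the $\mathbb{F}_2$-dimension of its residue code is the detail the paper leaves implicit). No gap.
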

\begin{proof}
   The proof follows immediately from the definition of the rank of a free $E$-linear code together with Theorem \ref{thm1a}.
\end{proof}

Now, we investigate the three symplectic duals of the sum of two $E$-linear codes.
\begin{theorem}\label{Thm15}
    For two $E$-linear codes $C$ and $D$,
    \begin{enumerate}
        \item $(C+D)^{\perp_S}=C^{\perp_S} \cap D^{\perp_S}$;
       \item  $(C+D)^{\perp_{S_L}}=C^{\perp_{S_L}} \cap D^{\perp_{S_L}}$;
       \item $(C+D)^{\perp_{S_R}}=C^{\perp_{S_R}} \cap D^{\perp_{S_R}}$.

    \end{enumerate}
    \end{theorem}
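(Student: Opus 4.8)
The plan is to exploit the bilinearity (additivity in each slot) of the symplectic inner product $\langle x, y\rangle_s = x\,\Omega_{2n}\,y^{T}$, together with the elementary observations that $C+D$ is again a left $E$-submodule of $E^{2n}$ and that $C \subseteq C+D$ and $D \subseteq C+D$ (since $0 \in D$ and $0 \in C$ respectively). I would first establish parts (2) and (3) directly by mutual inclusion, and then deduce part (1) from them by intersecting.

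For part (2), I would prove the two inclusions separately. For $(C+D)^{\perp_{S_L}} \subseteq C^{\perp_{S_L}} \cap D^{\perp_{S_L}}$, take $z \in (C+D)^{\perp_{S_L}}$. Since every $c \in C$ can be written as $c = c + 0 \in C + D$, we get $\langle z, c\rangle_s = 0$ for all $c \in C$, so $z \in C^{\perp_{S_L}}$; by symmetry $z \in D^{\perp_{S_L}}$. For the reverse inclusion, take $z \in C^{\perp_{S_L}} \cap D^{\perp_{S_L}}$ and an arbitrary $w = c + d \in C+D$; additivity in the second argument gives $\langle z, w\rangle_s = \langle z, c\rangle_s + \langle z, d\rangle_s = 0$, whence $z \in (C+D)^{\perp_{S_L}}$.

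For part (3), the argument is identical except that the code vectors occupy the first slot of the symplectic form in the definition of $C^{\perp_{S_R}}$. Additivity in the first argument, namely $\langle c + d, z\rangle_s = \langle c, z\rangle_s + \langle d, z\rangle_s$, together with $C, D \subseteq C+D$, yields the same mutual inclusions and hence the equality.

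Finally, part (1) follows by intersecting: $(C+D)^{\perp_S} = (C+D)^{\perp_{S_L}} \cap (C+D)^{\perp_{S_R}} = (C^{\perp_{S_L}} \cap D^{\perp_{S_L}}) \cap (C^{\perp_{S_R}} \cap D^{\perp_{S_R}})$, and rearranging the intersection gives $(C^{\perp_{S_L}} \cap C^{\perp_{S_R}}) \cap (D^{\perp_{S_L}} \cap D^{\perp_{S_R}}) = C^{\perp_S} \cap D^{\perp_S}$. I do not anticipate a genuine obstacle here: the statement is the standard ``dual of a sum equals intersection of duals,'' and the only points requiring a moment of care over the non-unital ring $E$ are confirming that the symplectic form is additive in each argument (which is immediate from its matrix expression $x\,\Omega_{2n}\,y^{T}$) and that $C+D$ is honestly a submodule containing both $C$ and $D$, so that the forward inclusions are legitimate.
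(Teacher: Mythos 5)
Your proof is correct and rests on the same two ingredients as the paper's: the containments $C, D \subseteq C+D$ give the forward inclusion, and additivity of the symplectic form in each slot gives the reverse one. The only difference is organizational --- the paper proves part (1) directly and declares (2) and (3) analogous, while you prove (2) and (3) first and obtain (1) by intersecting; your ordering is in fact slightly safer, since over the noncommutative ring $E$ the form $\langle \cdot,\cdot\rangle_s$ is not symmetric and membership in the two-sided dual genuinely requires checking both slots, which the intersection $C^{\perp_S}=C^{\perp_{S_L}}\cap C^{\perp_{S_R}}$ handles automatically.
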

    \begin{proof}
        \begin{enumerate}
            \item Let $x \in (C+D)^{\perp_S}$. Then $\langle x,y \rangle_s=0$ for all $y \in C+D$. Since $C, D \subseteq C+D$, $\langle x,y \rangle_s=0$ for all $y \in C$ and $\langle x,y \rangle_s=0$ for all $y \in D$. Hence, $x \in C^{\perp_S}$ and $x \in D^{\perp_S}$. Consequently, $x \in C^{\perp_S} \cap D^{\perp_S}$. Therefore, $(C+D)^{\perp_S} \subseteq C^{\perp_S} \cap D^{\perp_S}$. Next, suppose that $x \in C^{\perp_S} \cap D^{\perp_S}$. Since $ x \in C^{\perp_S}$ and $ x \in D^{\perp_S}$, we have

          $$ \langle x,y \rangle_s=0~~ \text{for all}~~ y \in C~~ \text{and}~~ \langle x,z \rangle_s=0~~ \text{for all}~~ z \in D.$$
            Hence, $\langle x,y \rangle_s+\langle x,z \rangle_s=0$ for all $y \in C$ and $z \in D$. This implies that $\langle x,y+z\rangle_s=0$ for all $y \in C$ and $z \in D$. Therefore, $x \in (C+D)^{\perp_S}$, and so $C^{\perp_S} \cap D^{\perp_S} \subseteq (C+D)^{\perp_S}$. Thus, $(C+D)^{\perp_S}=C^{\perp_S} \cap D^{\perp_S}$.\par

             The proofs of the remaining two parts are analogous to that of the first part.
        \end{enumerate}
    \end{proof}

 We now explore the symplectic hull of the sum of two free $E$-linear codes and compute the corresponding symplectic hull-rank.
 \begin{theorem}\label{Thm16}
     Let $C$ and $D$ be two free $E$-linear [$2n,k$] and [$2n,k'$]-codes with symplectic hull-ranks $l_1$ and $l_2$, respectively. Moreover, $SHull(C) \subseteq D^{\perp_S}$, $SHull(D) \subseteq C^{\perp_S}$, and $rank(SHull(C)\cap SHull(D))=l$. Then the symplectic hull-rank of the sum $C+D$ is $ l_1+l_2-l$.
 \end{theorem}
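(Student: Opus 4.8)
The plan is to push the whole problem down to the residue codes over $\mathbb{F}_2$ and then settle the resulting dimension identity by a Gram-matrix computation. Since $C,D$ are free, Theorem~\ref{Thm1c} gives $C=\kappa C_{Res}\oplus\zeta C_{Res}$ and $D=\kappa D_{Res}\oplus\zeta D_{Res}$, so $C+D=\kappa(C_{Res}+D_{Res})\oplus\zeta(C_{Res}+D_{Res})$ is again free with $(C+D)_{Res}=C_{Res}+D_{Res}$. By Theorem~\ref{thm3}, $rank(SHull(C+D))=dim(SHull(C_{Res}+D_{Res}))$, while $l_1=dim(SHull(C_{Res}))$ and $l_2=dim(SHull(D_{Res}))$. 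Using Theorem~\ref{Thm3a} and Corollary~\ref{Cor1}, $SHull(C)=\kappa\,SHull(C_{Res})\oplus\zeta\,SHull(C_{Res})$ and similarly for $D$; intersecting these $\kappa/\zeta$-graded modules gives $l=rank(SHull(C)\cap SHull(D))=dim(SHull(C_{Res})\cap SHull(D_{Res}))$. Finally, by Corollary~\ref{Cor3}(3) and freeness, $D^{\perp_S}=\kappa(D_{Res})^{\perp_S}\oplus\zeta(D_{Res})^{\perp_S}$, so $SHull(C)\subseteq D^{\perp_S}$ is equivalent to $SHull(C_{Res})\subseteq(D_{Res})^{\perp_S}$, and likewise $SHull(D)\subseteq C^{\perp_S}$ becomes $SHull(D_{Res})\subseteq(C_{Res})^{\perp_S}$. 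Writing $A=C_{Res}$, $B=D_{Res}$, the theorem is now equivalent to the $\mathbb{F}_2$-identity $dim(SHull(A+B))=dim(SHull(A))+dim(SHull(B))-dim(SHull(A)\cap SHull(B))$, i.e.\ $SHull(A+B)=SHull(A)+SHull(B)$. The inclusion $\supseteq$ is immediate, since each hypothesis says precisely that $SHull(A)$ (resp.\ $SHull(B)$) is $\perp_S$-orthogonal to all of $A+B$, so by Theorem~\ref{Thm15}(1) both sit inside $(A+B)\cap(A+B)^{\perp_S}=SHull(A+B)$; the content is the reverse inclusion, for which I would argue by dimensions.

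Next I would choose generator matrices $G_A,G_B$ and form $M_A=G_A\Omega_{2n}G_A^T$, $M_B=G_B\Omega_{2n}G_B^T$, $N=G_A\Omega_{2n}G_B^T$ and $M=\begin{pmatrix}M_A&N\\N^T&M_B\end{pmatrix}$. Counting the row-dependencies of $\binom{G_A}{G_B}$ (whose number is $dim(A\cap B)$) yields $dim(SHull(A+B))=corank(M)-dim(A\cap B)$. The two hypotheses translate exactly into the column-space containments $\mathrm{col}(N)\subseteq\mathrm{col}(M_A)$ and $\mathrm{col}(N^T)\subseteq\mathrm{col}(M_B)$. Writing $N=M_AX$ and performing the congruence $M\mapsto P^TMP$ with $P=\begin{pmatrix}I&X\\0&I\end{pmatrix}$, characteristic $2$ makes both off-diagonal blocks vanish (the top-right one being $M_AX+N=2M_AX=0$), so $M$ is congruent to $\mathrm{diag}(M_A,S)$ with Schur complement $S=M_B+N^TX$, giving $rank(M)=rank(M_A)+rank(S)$.

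The remaining and decisive step is to prove $corank(S)=l_2+\bigl(dim(A\cap B)-dim(SHull(A)\cap SHull(B))\bigr)$, which is precisely the reverse inclusion. Geometrically $S$ is the Gram matrix of the copy $B'$ of $B$ that the congruence has rendered $\Omega_{2n}$-orthogonal to $A$, and one must identify its radical while accounting for the dependencies among its rows. I expect this to be the hard part: passing to $\overline{A+B}=(A+B)/SHull(A+B)$, the hypothesis $SHull(A)\subseteq B^{\perp_S}$ forces $A\cap SHull(A+B)=SHull(A)$, so the induced symplectic form is nondegenerate on the images $\overline{A},\overline{B}$, and the whole identity collapses to $\overline{A}\cap\overline{B}=\overline{A\cap B}$, equivalently $(A+SHull(A+B))\cap(B+SHull(A+B))=(A\cap B)+SHull(A+B)$. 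A naive element-chase only peels off one hull-summand at a time and leaves a residual term back in $SHull(A+B)$; closing this gap is exactly where the cross-orthogonality hypotheses, together with the nondegeneracy of the induced form on the quotient, must be invoked.
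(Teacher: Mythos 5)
Your reduction to the residue codes is correct and is essentially how the paper also operates (via $rank(SHull(\cdot))=\dim SHull((\cdot)_{Res})$), and you are right that $SHull(A)+SHull(B)\subseteq SHull(A+B)$ follows immediately from the hypotheses. The genuine gap is that you never establish the reverse inclusion: you reduce everything to the identity $corank(S)=l_2+\bigl(\dim(A\cap B)-\dim(SHull(A)\cap SHull(B))\bigr)$ and explicitly leave it open. This gap cannot be closed, because the reverse inclusion --- and hence the theorem --- is false under the stated hypotheses. Take $n=2$ and the standard symplectic basis $e_1=(1,0|0,0)$, $e_2=(0,1|0,0)$, $f_1=(0,0|1,0)$, $f_2=(0,0|0,1)$ of $\mathbb{F}_2^4$, and let $A=\langle e_1,f_1\rangle$, $B=\langle e_1+e_2,f_1\rangle$. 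Both are symplectically nondegenerate, so $SHull(A)=SHull(B)=\{0\}$, the hypotheses hold trivially, and $l_1=l_2=l=0$; but $A+B=\langle e_1,e_2,f_1\rangle$ contains $e_2$ in its radical, so $\dim SHull(A+B)=1\neq 0$. Setting $C=\kappa A\oplus\zeta A$ and $D=\kappa B\oplus\zeta B$ lifts this to free $E$-linear codes contradicting the theorem (and the corollary that a sum of symplectic LCD codes is symplectic LCD). In your Gram-matrix setup this example gives $corank(S)=2$ while the right-hand side of your desired identity equals $1$, confirming that the step you flagged as ``the hard part'' is actually false.

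For comparison, the paper's proof obtains the reverse inclusion in one line from the identity $(C+D)\cap X=(C\cap X)+(D\cap X)$ with $X=C^{\perp_S}\cap D^{\perp_S}$. That is a distributive law for submodules which does not hold in general, and it is precisely what fails in the example above: $X$ is generated by $\kappa e_2$ and $\zeta e_2$, which lie in $C+D$ but in neither $C$ nor $D$. So your instinct not to assert the missing inclusion was sound; the honest conclusion is that the hypotheses are too weak. If one strengthens them to $C\subseteq D^{\perp_S}$ and $D\subseteq C^{\perp_S}$ (mutual orthogonality of the codes rather than of their hulls), then for $x=c+d\in X$ one gets $c=x+d\in C\cap X$ and $d\in D\cap X$ directly, the distributive step becomes valid, and both the paper's argument and your Schur-complement computation go through.
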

 \begin{proof}
    Let $C$ and $D$ be two free $E$-linear codes. From Theorem \ref{Thm15}, $(C+D)^{\perp_S}=C^{\perp_S} \cap D^{\perp_S}$. We have
    \begin{align*}
        SHull(C+D) & = (C+D) \cap (C+D)^{\perp_S} \\
        & = (C+D) \cap (C^{\perp_S} \cap D^{\perp_S}) \\
      &=(C \cap (C^{\perp_S} \cap D^{\perp_S}))+(D \cap (C^{\perp_S} \cap D^{\perp_S})) \\
      &=((C \cap C^{\perp_S}) \cap D^{\perp_S}))+((D \cap (D^{\perp_S}) \cap C^{\perp_S})) \\
      &=(SHull(C) \cap D^{\perp_S}) + (SHull(D) \cap C^{\perp_S}) \\
      &=SHull(C) +SHull(D).
    \end{align*}
    Therefore, \begin{align*}
         rank(SHull(C+D))&=rank(SHull(C))+rank(SHull(D))\\&\hspace{1cm} -rank(SHull(C) \cap SHull(D))\\
         &=l_1+l_2-l.
    \end{align*}
 \end{proof}

\begin{corollary}
   The sum of two symplectic LCD $E$-codes is also a symplectic LCD $E$-code.
\end{corollary}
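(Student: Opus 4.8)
The plan is to derive this as a direct consequence of Theorem \ref{Thm16}. Recall that a symplectic LCD $E$-code is precisely one whose symplectic hull is trivial, i.e. $SHull(C) = \{0\}$, which for a free code means $rank(SHull(C)) = 0$. So if $C$ and $D$ are both symplectic LCD, then $l_1 = rank(SHull(C)) = 0$ and $l_2 = rank(SHull(D)) = 0$.

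First I would verify that the hypotheses of Theorem \ref{Thm16} are automatically satisfied in the LCD setting. Since $SHull(C) = \{0\}$, the containment $SHull(C) \subseteq D^{\perp_S}$ holds trivially, and likewise $SHull(D) = \{0\} \subseteq C^{\perp_S}$. Moreover $SHull(C) \cap SHull(D) = \{0\}$, so its rank is $l = 0$. Thus all the side conditions needed to invoke Theorem \ref{Thm16} are met without any extra work.

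Applying Theorem \ref{Thm16} then gives
\begin{equation*}
    rank(SHull(C+D)) = l_1 + l_2 - l = 0 + 0 - 0 = 0.
\end{equation*}
Since $C+D$ is a free $E$-linear code whose symplectic hull has rank zero, we conclude $SHull(C+D) = \{0\}$, which is exactly the statement that $C+D$ is a symplectic LCD $E$-code.

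The only point requiring mild care is ensuring that $C+D$ is itself free, so that "rank zero" genuinely forces the hull to be the zero code (via Theorem \ref{Thm3a} and Theorem \ref{thm3}, which tie freeness and rank together). This is the sole potential obstacle, but it is minor: the sum of two free $E$-linear codes generated by $\kappa$-multiples of their residue bases is again of this form, so $C+D$ remains free and the rank computation directly yields triviality of the hull. No genuine difficulty arises beyond checking these routine freeness and containment facts.
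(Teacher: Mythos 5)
Your proposal is correct and follows essentially the same route as the paper: both verify that the hypotheses of Theorem \ref{Thm16} hold trivially when $SHull(C)=SHull(D)=\{0\}$ and then read off $rank(SHull(C+D))=0$. Your extra remark about the freeness of $C+D$ is a reasonable point of care (the paper glosses over it), but it does not change the argument.
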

\begin{proof}
    Let $C$ and $D$ be two symplectic LCD $E$-codes. Then $SHull(C)=\{0\} \subseteq D^{\perp_S}$, $SHull(D)=\{0\}\subseteq C^{\perp_S}$, and so $rank(SHull(C)\cap SHull(D))=0$. Consequently, from Thoerem \ref{Thm16}, we conclude that $SHull(C+D)=\{0\}$. Thus, $C+D$ is also a symplectic LCD code.
\end{proof}
\section{Build-up constructions}
This section presents two build-up construction techniques that construct free $E$-linear codes with a larger length and symplectic hull-rank from free $E$-linear codes with a smaller length and symplectic hull-rank. To support our build-up construction methods, we also provide illustrative examples of codes constructed using these methods. Throughout, a free $E$-linear code of length $2n$ and rank $k$ is denoted by [$2n,k$].\par

The following result from \cite{Li2024} is essential for our build-up construction methods.

\begin{theorem}[ \cite{Li2024}, Theorem $3.1$]\label{Thm4} Let $G$ be a generator matrix of a $k$-dimensional $\mathbb{F}_q$-linear code $C$  of length $2n$ with symplectic hull-dimension $h$. Then, $h=k-rank(G\Omega_{2n} G^T)$ where $G^T$ denotes the transpose of $G$.
\end{theorem}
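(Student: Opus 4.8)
The plan is to translate the symplectic hull into the kernel of the single $k\times k$ matrix $M:=G\Omega_{2n}G^T$ and then apply rank–nullity. The starting point is the observation that, since $G$ is a generator matrix of a $k$-dimensional code, its rows are linearly independent, so the $\mathbb{F}_q$-linear map $a\mapsto aG$ from $\mathbb{F}_q^{k}$ to $C$ is a bijection. Thus every codeword of $C$ is uniquely $x=aG$ for some row vector $a\in\mathbb{F}_q^{k}$, and computing $\dim SHull(C)$ amounts to counting the $a$'s for which $aG$ lies in the symplectic dual $C^{\perp_S}$.

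First I would unfold the dual-membership condition using the bilinear-form description $\langle x,y\rangle_s=x\Omega_{2n}y^{T}$. A codeword $x=aG$ lies in $SHull(C)=C\cap C^{\perp_S}$ precisely when $\langle y,x\rangle_s=0$ for every $y\in C$. Writing $y=bG$ with $b$ ranging over all of $\mathbb{F}_q^{k}$, this becomes
\[
bG\,\Omega_{2n}\,(aG)^{T}=b\bigl(G\Omega_{2n}G^{T}\bigr)a^{T}=0\qquad\text{for all }b\in\mathbb{F}_q^{k}.
\]
Since $b$ ranges over the full space $\mathbb{F}_q^{k}$, the scalar $b\,(Ma^{T})$ vanishes identically if and only if the column vector $Ma^{T}=0$. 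Hence $aG\in SHull(C)$ if and only if $a^{T}\in\ker M$, where $M=G\Omega_{2n}G^{T}$ is a $k\times k$ matrix over $\mathbb{F}_q$.

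Next I would transport this back through the bijection $a\mapsto aG$. Because that map is an isomorphism of $\mathbb{F}_q$-vector spaces carrying $\{a: Ma^{T}=0\}$ onto $SHull(C)$, we obtain $\dim SHull(C)=\dim\ker M=k-\operatorname{rank}(M)$ by the rank–nullity theorem applied to $M$. Identifying $h=\dim SHull(C)$ then yields $h=k-\operatorname{rank}(G\Omega_{2n}G^{T})$, as claimed.

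I do not expect a serious obstacle here; the argument is essentially linear algebra once the symplectic form is put in its matrix representation. The only point requiring genuine care is the quantifier reduction "$b\,(Ma^{T})=0$ for all $b$ $\iff$ $Ma^{T}=0$", which relies on $b$ sweeping out the entire coordinate space $\mathbb{F}_q^{k}$ rather than merely a generating set with relations; this is guaranteed by the full row rank of $G$. A secondary subtlety worth stating explicitly is that $M$ is the \emph{same} matrix whether one computes the left or two-sided symplectic condition over a field, so no left/right distinction arises in the $\mathbb{F}_q$ setting, in contrast to the situation over the ring $E$.
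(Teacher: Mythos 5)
Your argument is correct: reducing membership in $SHull(C)$ to the kernel of $M=G\Omega_{2n}G^{T}$ via the parametrization $x=aG$ and then applying rank--nullity is exactly the standard proof of this fact. Note that the paper itself offers no proof here --- it imports the statement verbatim from Theorem 3.1 of the cited reference of Li and Zhu --- so there is nothing internal to compare against; your write-up correctly supplies the missing argument, including the two points that actually need care (full row rank of $G$ to justify both the bijection $a\mapsto aG$ and the quantifier elimination over $b$).
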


Here, we present our first build-up construction method.\par

\begin{theorem}[\textbf{Construction I}] Let $C$ be a free $E$-linear [$2n,k$]-code with sympathetic hull-rank $l$, and $G$ (with rows $r_i=(c_{i,1},c_{i,2}, \ldots c_{i,n} |d_{i,1},d_{i,2}, \ldots d_{i,n})$ for $1 \leq i \leq k$) be a generator matrix of its residue code $C_{Res}$. Also, assume that $x=(a|b)=(a_1, a_2, \ldots, a_n | b_1, b_2, \ldots, b_n) \in \mathbb{F}_2^{2n}$ such that $\langle x, r_i \rangle_s =0$ for $1 \leq i \leq k$.
Then
\begin{enumerate}
    \item[(a)] The matrix $G'$ given below generates a free $E$-linear [$2n+2, k+1$]-code $D$ with symplectic hull-rank $l+1$:

   \[
G' =
\left(
\setlength{\arraycolsep}{3pt}
\begin{array}{cccccccccc}
\kappa & \kappa a_1 & \kappa a_2 &   \cdots & \kappa a_n & \kappa & \kappa b_1 & \kappa b_2 &   \cdots & \kappa b_n \\
0 &\kappa c_{1,1} & \kappa c_{1,2} & \cdots &  \kappa c_{1,n} & 0 & \kappa d_{1,1} & \kappa d_{1,2} & \cdots &  \kappa d_{1,n}\\
0 &\kappa c_{2,1} & \kappa c_{2,2} & \cdots &  \kappa c_{2,n} & 0 & \kappa d_{2,1} & \kappa d_{2,2} & \cdots &  \kappa d_{2,n}\\
\vdots & \vdots & \vdots & &  \vdots & \vdots & \vdots & \vdots & & \vdots \\
0 &\kappa c_{k,1} & \kappa c_{k,2} & \cdots &  \kappa c_{k,n} & 0 & \kappa d_{k,1} & \kappa d_{k,2} & \cdots &  \kappa d_{k,n}\\
\end{array}
\right)
.\]

    \item[(b)] Let $H$ (with rows $s_j=(s_{j,1},s_{j,2}, \ldots s_{j,n} |t_{j,1},t_{j,2}, \ldots t_{j,n})$ for $1 \leq j \leq 2n-k$) be a parity-check matrix of its residue code $C_{Res}$. Also, let $y=(u|v)=(u_1,u_2, \ldots, u_n | v_1, v_2, \ldots, v_n) \in (C_{Res})^{\perp_S}$, and $\delta=\langle x,y \rangle_s$. Further, assume that $z_j=\langle x, s_j \rangle_s$ for $1 \leq j \leq m$ where $m=2n-k$. Then the code $D$ has the following parity-check matrix:

\[
H' =
\left(
\setlength{\arraycolsep}{2.5pt}
\begin{array}{cccccccccc}

\kappa & \kappa u_1 & \kappa u_2 &   \cdots & \kappa u_n & \kappa(1+\delta) & \kappa v_1 & \kappa v_2 &   \cdots & \kappa v_n \\
0 &\kappa s_{1,1} & \kappa s_{1,2} & \cdots &  \kappa s_{1,n} & z_1 & \kappa t_{1,1} & \kappa t_{1,2} & \cdots &  \kappa t_{1,n}\\
0 &\kappa s_{2,1} & \kappa s_{2,2} & \cdots &  \kappa s_{2,n} & z_2 & \kappa t_{2,1} & \kappa t_{2,2} & \cdots &  \kappa t_{2,n}\\
\vdots & \vdots & \vdots & &  \vdots & \vdots & \vdots & \vdots & & \vdots \\
0 &\kappa s_{m,1} & \kappa s_{m,2} & \cdots &  \kappa s_{m,n} & z_m & \kappa t_{m,1} & \kappa t_{m,2} & \cdots &  \kappa t_{m,n}\\
\end{array}
\right)
.\]

\end{enumerate}
\end{theorem}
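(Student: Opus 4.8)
The plan is to reduce both parts to statements about the binary residue codes and then lift them back to $E$ via the $\kappa$-multiplication principle (Theorem $1$ of \cite{Kushwaha2}), combining the hull-rank formula of Theorem \ref{thm3} with the dimension count of Theorem \ref{Thm4}.

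For part (a), I would first write $G'=\kappa G_D$, where $G_D$ is the binary matrix obtained by stripping the factor $\kappa$ from every entry, so that $G_D$ has top row $\rho_0=(1,a_1,\dots,a_n\mid 1,b_1,\dots,b_n)$ and remaining rows $\rho_i=(0,c_{i,1},\dots,c_{i,n}\mid 0,d_{i,1},\dots,d_{i,n})$. Since the top row carries a leading $1$ while the lower rows reproduce $G$ (of rank $k$), $G_D$ has $\mathbb{F}_2$-rank $k+1$, so by Theorem $1$ of \cite{Kushwaha2} the code $D=\langle G'\rangle$ is a free $[2n+2,k+1]$-code with $D_{\mathrm{Res}}=\langle G_D\rangle_{\mathbb{F}_2}$. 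By Theorem \ref{thm3} the symplectic hull-rank of $D$ equals $\dim(SHull(D_{\mathrm{Res}}))$, which by Theorem \ref{Thm4} is $(k+1)-rank\big(G_D\,\Omega_{2n+2}\,G_D^T\big)$.

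The heart of the argument is this Gram matrix. Splitting each codeword into halves of length $n+1$ and expanding the symplectic product, the $(0,0)$ entry vanishes in characteristic $2$, the border entries satisfy $\langle\rho_0,\rho_i\rangle_s=\langle x,r_i\rangle_s=0$ by hypothesis, and the block indexed by $i,i'\ge 1$ equals $\langle r_i,r_{i'}\rangle_s$ because the new leading zeros contribute nothing. Hence
$$G_D\,\Omega_{2n+2}\,G_D^T=\begin{pmatrix}0 & 0\\ 0 & G\,\Omega_{2n}\,G^T\end{pmatrix},$$
whose rank is $rank(G\,\Omega_{2n}\,G^T)$. Applying Theorem \ref{Thm4} to $C_{\mathrm{Res}}$ together with Theorem \ref{thm3} for the free code $C$ gives $rank(G\,\Omega_{2n}\,G^T)=k-l$, so $\dim(SHull(D_{\mathrm{Res}}))=(k+1)-(k-l)=l+1$ and thus $rank(SHull(D))=l+1$. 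The condition $\langle x,r_i\rangle_s=0$ is precisely what annihilates the border of the Gram matrix and leaves its rank unchanged; this is the crux of part (a).

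For part (b), I would again work at the residue level, writing $\bar H'$ for the reduction of $H'$ modulo $J$, so that $H'=\kappa\bar H'$; explicitly $\bar H'$ has rows $\sigma_0=(1,u_1,\dots,u_n\mid 1+\delta,v_1,\dots,v_n)$ and $\sigma_j=(0,s_{j,1},\dots,s_{j,n}\mid z_j,t_{j,1},\dots,t_{j,n})$. Since $D$ is free, $D^{\perp_S}$ is free with residue code $(D_{\mathrm{Res}})^{\perp_S}$ by Corollary \ref{Cor3}(3), so Theorem $1$ of \cite{Kushwaha2} reduces the claim to showing that $\bar H'$ generates $(D_{\mathrm{Res}})^{\perp_S}$. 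For containment I would check the four orthogonality cases, which after expansion collapse to $\langle\rho_0,\sigma_0\rangle_s=\delta+\langle x,y\rangle_s=0$, $\langle\rho_0,\sigma_j\rangle_s=z_j+\langle x,s_j\rangle_s=0$, $\langle\rho_i,\sigma_0\rangle_s=\langle r_i,y\rangle_s=0$ (since $y\in(C_{\mathrm{Res}})^{\perp_S}$), and $\langle\rho_i,\sigma_j\rangle_s=\langle r_i,s_j\rangle_s=0$ (since $s_j\in(C_{\mathrm{Res}})^{\perp_S}$); the inserted correction terms $1+\delta$ and $z_j$ are exactly what cancel the cross terms produced by the new leading $1$'s. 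For equality I would note that the $m+1$ rows of $\bar H'$ are independent (the bottom $m$ restrict on the old coordinates to the rows of the parity-check matrix $H$, while $\sigma_0$ alone carries a $1$ in the first coordinate), matching $\dim(D_{\mathrm{Res}})^{\perp_S}=2(n+1)-(k+1)=m+1$; containment and equal dimension force $\langle\bar H'\rangle=(D_{\mathrm{Res}})^{\perp_S}$, so $H'$ generates $D^{\perp_S}$.

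The main obstacle I anticipate is the bookkeeping in these two symplectic-product computations: tracking the newly inserted $0$th coordinate in each half and verifying that the correction entries cancel precisely against $\langle x,y\rangle_s$ and $\langle x,s_j\rangle_s$. A secondary subtlety is that $E$ is non-commutative, so the symplectic form is only one-sided over $E$; however, both reductions live over the commutative field $\mathbb{F}_2$, where the form is symmetric, and the freeness of $D$ allows Corollary \ref{Cor3}(3) to promote the residue-level identity to the genuine two-sided dual $D^{\perp_S}$.
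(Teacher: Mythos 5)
Your proposal is correct and follows essentially the same route as the paper: in (a) you reduce to the residue code, compute the bordered Gram matrix $G_1\Omega_{2n+2}G_1^T$ and show the new row and column vanish so the rank is unchanged, then apply Theorem \ref{Thm4} and Theorem \ref{thm3}; in (b) you verify the same four symplectic orthogonality cases (with the correction entries $1+\delta$ and $z_j$ cancelling $\langle x,y\rangle_s$ and $\langle x,s_j\rangle_s$) and close with the same dimension count. The only cosmetic difference is that you make the lift between $E$-codes and their residue codes explicit via Theorem $1$ of the cited reference and Corollary \ref{Cor3}(3), which the paper leaves implicit.
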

\begin{proof}

\begin{enumerate}
    \item[(a)]
   It follows that $G'$ generates a free $E$-linear code, and its residue code has the generator matrix
    \[
G_1 =
\left(
\begin{array}{cccccccccc}
1 &  a_1 &  a_2 &   \cdots &  a_n & 1 &  b_1 &  b_2 &   \cdots &  b_n \\
0 & c_{1,1} &  c_{1,2} & \cdots &   c_{1,n} & 0 & d_{1,1} &  d_{1,2} & \cdots &   d_{1,n}\\
0 & c_{2,1} &  c_{2,2} & \cdots &   c_{2,n} & 0 &  d_{2,1} &  d_{2,2} & \cdots &   d_{2,n}\\
\vdots & \vdots & \vdots & &  \vdots & \vdots & \vdots & \vdots & & \vdots \\
0 & c_{k,1} &  c_{k,2} & \cdots &   c_{k,n} & 0 &  d_{k,1} &  d_{k,2} & \cdots &  d_{k,n}\\
\end{array}
\right)
.\]
If $r_i'$ denotes the $i^{th}$ row of $G_1$ for $1\leq i\leq k+1$, then

\[
G_1\Omega_{2n+2}G_1^T=
\left(
\setlength{\arraycolsep}{3pt}
\begin{array}{cccc}
\langle r_1',r_1' \rangle_s & \langle r_1',r_2' \rangle_s & \cdots \cdots & \langle r_1',r_{k+1}' \rangle_s \\
\langle r_2',r_1' \rangle_s & \langle r_2',r_2' \rangle_s & \cdots \cdots & \langle r_2',r_{k+1}' \rangle_s \\
\vdots & \vdots & & \vdots \\
\langle r_{k+1}',r_1' \rangle_s & \langle r_{k+1}',r_2' \rangle_s & \cdots \cdots & \langle r_{k+1}',r_{k+1}' \rangle_s

\end{array}
\right)
.\]

The entries in the first row and in the first column of $G_1\Omega_{2n+2}G_1^T$ are the symplectic inner products of the first row of $G_1$ to its other rows. Then, for $2\leq i \leq k$, we have
$$\langle r_1', r_i' \rangle_s=1\cdot 0+ 0 \cdot 1 + \sum_{j=1}^n (a_jd_{i,j}+b_jc_{i,j})=\langle x, r_i \rangle_s=0.$$
Also, $\langle r_i', r_1' \rangle_s=0$, as symplectic inner product is symmetric in $\mathbb{F}_2^{2n}$. Further, the symplectic inner product of the top row of $G_1$ with itself is
$$\langle r_1', r_1' \rangle_s=1 \cdot 1 +1 \cdot 1 + \sum_{j=1}^n(a_jb_j+b_ja_j)=0.$$
Since all the above discussed symplectic inner products are zero, the first row and the first column of the matrix  $G_1\Omega_{2n+2}G_1^T$ are also zero. On the other hand, the symplectic inner product of other rows $r_i'$ and $r_j'$ for $2  \leq i,j \leq k+1$, i.e., any two rows from the second row to the $(k+1)^{th}$ row of $G_1$, with each other is given by

$$ \langle r_i', r_j' \rangle_s=0 \cdot 0+ 0 \cdot 0 + \sum_{h=1}^n(c_{i,h}d_{j,h}+d_{i,h}c_{j,h})= \langle r_i, r_j \rangle_s.$$

Hence, the matrix $G_1\Omega_{2n+2}G_1^T$ reduces to
\[
G_1\Omega_{2n+2}G_1^T=
\left(
\begin{array}{ccccccc}
0 & 0 &  0 &    \cdots \cdots & & 0 \\
 0 &   & & &   & & \\
 0 &   & & G \Omega_{2n}G^T&  & & \\
\vdots &  & & & & & \\
0 &  & & &   & &
\end{array}
\right)
.\]
Therefore, $rank(G_1 \Omega_{2n+2} G_1^T)=rank(G \Omega_{2n} G^T)$. Since $C$ is free, by Theorem \ref{thm3}, we have $dim(SHull(C_{Res}))=rank(SHull(C))=l$.  Then, by Theorem \ref{Thm4}, $rank(G \Omega_{2n} G^T)=k-l$ and so $rank(G_1 \Omega_{2n+2} G_1^T)=k-l$. Again, Theorem \ref{Thm4} implies that the symplectic hull-dimension of $D_{Res}$ is given by $(k+1)-rank(G_1\Omega_{2n+2} G_1^T)=(k+1)-(k-l)=l+1$. For a free $E$-linear code $D$, $rank(D)=dim(D_{Res})$. Therefore,  $rank(SHull(D))=dim(SHull(D_{Res}))=l+1$.
\item[(b)] We have
\[
\pi(H') =H_1=
\left(
\setlength{\arraycolsep}{2.7pt}
\begin{array}{cccccccccc}

1 &  u_1 &  u_2 &   \cdots & u_n & (1+\delta) &  v_1 &  v_2 &   \cdots &  v_n \\
0 & s_{1,1} &  s_{1,2} & \cdots &   s_{1,n} & z_1 &  t_{1,1} &  t_{1,2} & \cdots &  t_{1,n}\\
0 & s_{2,1} &  s_{2,2} & \cdots &   s_{2,n} & z_2 &  t_{2,1} &  t_{2,2} & \cdots &   t_{2,n}\\
\vdots & \vdots & \vdots & &  \vdots & \vdots & \vdots & \vdots & & \vdots \\
0 & s_{m,1} &  s_{m,2} & \cdots &   s_{m,n} & z_m &t_{m,1} &  t_{m,2} & \cdots &   t_{m,n}\\
\end{array}
\right)
.\]

  Our aim is to show $G_1 \Omega_{2n+2}H_1^T=0$. In order to prove this, we calculate the following symplectic inner products:
\begin{enumerate}[label=\textbullet]
    \item Top row of $G_1$ with top row of $H_1$:
    \begin{align*}
    \langle (1,a|1,b), (1,u |1+\delta,v) \rangle_s &=1(1+\delta)+1\cdot 1 + \langle a, v \rangle + \langle b,u \rangle \\
    &= \delta + \langle (a|b), (u|v) \rangle_s\\
    &=\delta+\delta \\
    &=0.
    \end{align*}

    \item Top row of $G_1$ with other rows of $H_1$:
    \begin{align*}
        \langle (1,a| 1,b), (0,s_{i,j} |z_i,t_{i,j}) \rangle_s & =1 \cdot z_i+1 \cdot 0 + \langle a, t_{i,j} \rangle+ \langle b, s_{i,j} \rangle \\
        &=z_i+\langle (a|b), (s_{i,j},t_{i,j}) \rangle_s\\
        & = z_i+z_i \\
        & =0.
    \end{align*}
    \item Other rows of $G_1$ with top row of $H_1$:
    \begin{align*}
        \langle (0,c_{i,j}|0,d_{i,j}), (1,u| 1+\delta,v) \rangle_s & = 0 \cdot(1+\delta)+0 \cdot 1+ \langle c_{i,j},v \rangle  \\
        &\hspace{1cm}+\langle d_{i,j}, u \rangle \\
        &= \langle (c_{i,j}|d_{i,j}), (u|v) \rangle_s \\
        &=0.~~~~~~(\because~~ (u|v) \in (C_{Res})^{\perp_S})
    \end{align*}

    \item Other rows of $G_1$ with other rows of $H_1$:
    \begin{align*}
        \langle (0,c_{i,j}|0,d_{i,j}), (0,s_{i,j} |z_i,t_{i,j}) \rangle_s & =0 \cdot z_i+ \langle c_{i,j}, t_{i,j} \rangle + \langle d_{i,j}, s_{i,j} \rangle \\
        &=\langle (c_{i,j}|d_{i,j}), (s_{i,j}|t_{i,j}) \\
        &=0.
    \end{align*}

\end{enumerate}
 From all the above cases, we conclude that $G_1 \Omega_{2n+2}H_1^T=0$. Further, the dimension of the code generated by $H_1$ is $(2n-k)+1$, since the first row of $H_1$ is not a linear combination of the other rows of $H_1$. Also,
$$dim \langle H_1 \rangle =(2n-k)+1=(2n+2)-(k+1)=dim((D^\perp)_{Res}).$$
Therefore, $H_1$ is a parity-check matrix of $(D)_{Res}$. Thus, $H'$ is a parity-check matrix of $D$.

\end{enumerate}
\end{proof}

Now, we have our second build-up construction method as follows.

\begin{theorem}[\textbf{Construction II}]Let $C$ be a free $E$-linear [$2n,k$]-code with symplectic hull-rank $l$. Also, let $G$  be a generator matrix of its residue code $C_{Res}$ with rows $r_i=(c_{i,1},c_{i,2}, \ldots c_{i,n} |d_{i,1},d_{i,2}, \ldots d_{i,n})$ for $1 \leq i \leq k$. Furthermore, fix a vector $x=(a|b)=(a_1, a_2, \ldots, a_n | b_1, b_2, \ldots, b_n) \in \mathbb{F}_2^{2n}$ with $\langle x, r_i \rangle_s =0$ for $1 \leq i \leq k$.
Then
\begin{enumerate}
    \item[(a)] The free $E$-linear code $D$ with generator matrix $G'$ given below is a [$2n+4, k+1$]-code with symplectic hull-rank $l+1$:

   \[
G' =
\left(
\setlength{\arraycolsep}{3pt}
\begin{array}{cccccccccccc}
\kappa & \kappa  & \kappa a_1 & \kappa a_2 &   \cdots & \kappa a_n & \kappa & \kappa & \kappa b_1 & \kappa b_2 &   \cdots & \kappa b_n \\
0&0 &\kappa c_{1,1} & \kappa c_{1,2} & \cdots &  \kappa c_{1,n} & 0&0 & \kappa d_{1,1} & \kappa d_{1,2} & \cdots &  \kappa d_{1,n}\\
0&0 &\kappa c_{2,1} & \kappa c_{2,2} & \cdots &  \kappa c_{2,n} & 0&0 & \kappa d_{2,1} & \kappa d_{2,2} & \cdots &  \kappa d_{2,n}\\
\vdots & \vdots & \vdots &   \vdots & & \vdots & \vdots & \vdots & \vdots & \vdots & & \vdots \\
0&0 &\kappa c_{k,1} & \kappa c_{k,2} & \cdots &  \kappa c_{k,n} & 0&0 & \kappa d_{k,1} & \kappa d_{k,2} & \cdots &  \kappa d_{k,n}\\
\end{array}
\right)
.\]
    \item[(b)] Let $H$ (with rows $s_j=(s_{j,1},s_{j,2}, \ldots s_{j,n} |t_{j,1},t_{j,2}, \ldots t_{j,n}) ~~ \text{for}~~ 1 \leq j \leq 2n-k$) be a parity-check matrix of its residue code $C_{Res}$. Also, let $y=(u|v)=(u_1,u_2, \ldots, u_n | v_1, v_2, \ldots, v_n) \in (C_{Res})^{\perp_S}$ be a non-zero vector with $\langle x,y \rangle_s=0$. Further, assume that $z_j=\langle x, s_j \rangle_s=0$ for $1 \leq j \leq m$, where $m=2n-k$. Then the code $D$ has a parity-check matrix as follows:

\[
H' =
\left(
\setlength{\arraycolsep}{3pt}
\begin{array}{cccccccccccc}
\kappa & 0 & 0 & 0 & \cdots & 0 & 0 & \kappa & 0 & 0& \cdots & 0 \\
0 & \kappa & 0 & 0 & \cdots & 0 & \kappa & 0 & 0 & 0& \cdots & 0 \\
0 &0 & \kappa u_1 & \kappa u_2 &   \cdots & \kappa u_n & 0 &0 & \kappa v_1 & \kappa v_2 &   \cdots & \kappa v_n \\
0 &0 &\kappa s_{1,1} & \kappa s_{1,2} & \cdots &  \kappa s_{1,n} & 0 &0 & \kappa t_{1,1} & \kappa t_{1,2} & \cdots &  \kappa t_{1,n}\\
0 &0 &\kappa s_{2,1} & \kappa s_{2,2} & \cdots &  \kappa s_{2,n} & 0 &0 & \kappa t_{2,1} & \kappa t_{2,2} & \cdots &  \kappa t_{2,n}\\
\vdots & \vdots & \vdots & \vdots&   & \vdots & \vdots & \vdots & \vdots& \vdots & & \vdots \\
0 &0 &\kappa s_{m,1} & \kappa s_{m,2} & \cdots &  \kappa s_{m,n} & 0 &0 & \kappa t_{m,1} & \kappa t_{m,2} & \cdots &  \kappa t_{m,n}\\
\end{array}
\right)
.\]

\end{enumerate}
\end{theorem}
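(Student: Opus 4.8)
The plan is to follow the same two-step template as Construction~I, adjusting the bookkeeping for the four extra coordinates (two adjoined to each half) that raise the length from $2n$ to $2n+4$. Throughout I would reduce modulo the maximal ideal $J$: since every entry of $G'$ is $0$ or $\kappa$ times a bit, we have $G'=\kappa G_1$ with $G_1=\pi(G')$, so by the result of \cite{Kushwaha2} invoked in Theorem~\ref{thm1a} the matrix $G'$ generates a free $E$-linear code $D$ whose residue code $D_{Res}$ is generated by $G_1$, and likewise a candidate parity-check matrix will be $\kappa$ times its own residue.

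For part~(a) the key object is the Gram matrix $G_1\Omega_{2n+4}G_1^{T}$. Writing $r_1',\dots,r_{k+1}'$ for the rows of $G_1$, I would compute its entries in three blocks: $\langle r_1',r_1'\rangle_s=0$ because the symplectic form is alternating in characteristic $2$; $\langle r_1',r_i'\rangle_s=\langle x,r_i\rangle_s=0$ for $2\le i\le k+1$ by the standing hypothesis, since the two leading $1$'s of $r_1'$ in each half meet only zeros in $r_i'$; and $\langle r_i',r_j'\rangle_s=\langle r_i,r_j\rangle_s$ for $2\le i,j\le k+1$, which reproduces $G\Omega_{2n}G^{T}$ exactly. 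Thus $G_1\Omega_{2n+4}G_1^{T}$ is $G\Omega_{2n}G^{T}$ bordered by one zero row and column, hence has the same rank. Since $C$ is free, Theorem~\ref{thm3} gives $\dim SHull(C_{Res})=l$, whence $rank(G\Omega_{2n}G^{T})=k-l$ by Theorem~\ref{Thm4}; applying Theorem~\ref{Thm4} to $D_{Res}$ then yields symplectic hull-dimension $(k+1)-(k-l)=l+1$, and freeness upgrades this via Theorem~\ref{thm3} to $rank(SHull(D))=l+1$.

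For part~(b) I would again pass to $H_1=\pi(H')$ and first verify that $G_1\Omega_{2n+4}H_1^{T}=0$, a finite case check organised by the block shape of $H_1$. The two special rows (each carrying a single $1$ in each half within the four adjoined coordinates) pair to $0$ with every row of $G_1$; the $y$-row pairs with $r_1'$ to give $\langle x,y\rangle_s=0$ and with each $r_i'$ to give $\langle r_i,y\rangle_s=0$ because $y\in(C_{Res})^{\perp_S}$; and each $s_j$-row pairs with $r_1'$ to give $z_j=\langle x,s_j\rangle_s=0$ and with each $r_i'$ to give $\langle r_i,s_j\rangle_s=0$ since the $s_j$ span $(C_{Res})^{\perp_S}$. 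This establishes $\langle H_1\rangle\subseteq (D_{Res})^{\perp_S}$.

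The step I expect to be the main obstacle is upgrading this inclusion to an equality by a rank count. Here $\dim (D_{Res})^{\perp_S}=(2n+4)-(k+1)=m+3$, so one must exhibit $m+3$ independent rows of $H_1$. The two special rows have support confined to the four adjoined coordinates, where all other rows vanish, so they are independent of one another and of the rest, contributing $2$; the $y$-row together with $s_1,\dots,s_m$ occupy precisely the original $2n$ coordinates, and the whole count hinges on these $m+1$ vectors being independent there so as to contribute $m+1$. This is the delicate point deserving the closest scrutiny, because $y$ already lies in $(C_{Res})^{\perp_S}=\langle s_1,\dots,s_m\rangle$; moreover the hypotheses $\langle x,r_i\rangle_s=0$ and $z_j=\langle x,s_j\rangle_s=0$ together force $x\in C_{Res}\cap(C_{Res})^{\perp_S}=SHull(C_{Res})$. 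I would therefore focus the argument squarely on how $y$ interacts with the chosen basis $s_1,\dots,s_m$ and on whether the adjoined coordinates can supply the extra independent direction, since the orthogonality is routine and it is this final dimensional bookkeeping on which the parity-check claim genuinely turns.
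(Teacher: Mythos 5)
Your treatment of part (a) coincides with the paper's proof: the same computation showing that $G_1\Omega_{2n+4}G_1^{T}$ is $G\Omega_{2n}G^{T}$ bordered by a zero first row and column (the diagonal entry vanishing because the four extra $1\cdot 1$ terms cancel in characteristic $2$, the off-diagonal entries reducing to $\langle x,r_i\rangle_s=0$), followed by the same two applications of the rank formula $h=k-\mathrm{rank}(G\Omega_{2n}G^{T})$ and the identification of the hull-rank of a free code with the hull-dimension of its residue code. Nothing to add there, and your orthogonality case-check for $G_1\Omega_{2n+4}H_1^{T}=0$ in part (b) likewise matches the paper's.

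The obstacle you flag at the end of part (b), however, is not merely a delicate point to be handled with care: it is fatal, and the paper's own proof founders on exactly this step. Since $H$ is a parity-check matrix of $C_{Res}$, its $m=2n-k$ rows $s_1,\dots,s_m$ form a basis of $(C_{Res})^{\perp_S}$, so the hypothesis $y\in(C_{Res})^{\perp_S}$ forces $y=\sum_j\lambda_j s_j$, whence the third row $(0,0,u\mid 0,0,v)$ of $H_1$ is the same combination of the rows $(0,0,s_j\mid 0,0,t_j)$ and $\mathrm{rank}(H_1)\le m+2$, while $\dim\bigl((D_{Res})^{\perp_S}\bigr)=(2n+4)-(k+1)=m+3$. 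The paper simply asserts that ``the first three rows of $H_1$ are independent of all its other rows,'' which is false. One can exhibit the missing direction explicitly: the vector $w=(0,0,\mathbf{0}\mid 1,1,\mathbf{0})$ supported on the adjoined coordinates satisfies $\langle r_1',w\rangle_s=1+1=0$ and $\langle r_i',w\rangle_s=0$ for $i\ge 2$, so $w\in(D_{Res})^{\perp_S}$, yet its restriction $(0,0\mid 1,1)$ to the adjoined coordinates lies outside the span of $(1,0\mid 0,1)$ and $(0,1\mid 1,0)$ and all other rows of $H_1$ vanish there, so $w\notin\langle H_1\rangle$. Hence $H'$ as stated is not a parity-check matrix of $D$ (the $y$-row should be replaced by something like $\kappa w$), and no completion of your sketch can rescue statement (b) as written; your instinct to concentrate scrutiny on the dimensional bookkeeping rather than the routine orthogonality checks was exactly right.
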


\begin{proof}

\begin{enumerate}
    \item[(a)] We see that the matrix $G'$ generates a free $E$-linear code, and its residue code has the generator matrix
   \[
G_1 =
\left(
\begin{array}{cccccccccccc}
1 & 1  &  a_1 &  a_2 &   \cdots &  a_n & 1 & 1 &  b_1 &  b_2 &   \cdots &  b_n \\
0&0 & c_{1,1} &  c_{1,2} & \cdots &   c_{1,n} & 0&0 &  d_{1,1} &  d_{1,2} & \cdots &   d_{1,n}\\
0&0 & c_{2,1} &  c_{2,2} & \cdots &   c_{2,n} & 0&0 &  d_{2,1} &  d_{2,2} & \cdots &   d_{2,n}\\
\vdots & \vdots & \vdots &   \vdots & & \vdots & \vdots & \vdots & \vdots & \vdots & & \vdots \\
0&0 & c_{k,1} &  c_{k,2} & \cdots &   c_{k,n} & 0&0 &  d_{k,1} &  d_{k,2} & \cdots &   d_{k,n}\\
\end{array}
\right)
.\]

Similar to the \textbf{Construction I}, if $r_i'$ is the $i^{th}$ row of $G_1$ for $1\leq i\leq k+1$, then

\[
G_1\Omega_{2n+4}G_1^T=
\left(
\begin{array}{cccc}
\langle r_1',r_1' \rangle_s & \langle r_1',r_2' \rangle_s & \cdots \cdots & \langle r_1',r_{k+1}' \rangle_s \\
\langle r_2',r_1' \rangle_s & \langle r_2',r_2' \rangle_s & \cdots \cdots & \langle r_2',r_{k+1}' \rangle_s \\
\vdots & \vdots & & \vdots \\
\langle r_{k+1}',r_1' \rangle_s & \langle r_{k+1}',r_2' \rangle_s & \cdots \cdots & \langle r_{k+1}',r_{k+1}' \rangle_s

\end{array}
\right)
.\]

The symplectic inner product of the top row of $G_1$ with itself is given by
$$\langle r_1', r_1' \rangle_s=1 \cdot 1 +1 \cdot 1 + 1 \cdot 1 +1 \cdot 1+ \sum_{j=1}^n(a_jb_j+b_ja_j)=0.$$
Next, the symplectic inner product of the top row of $G_1$ with all other rows is
$$\langle r_1', r_i' \rangle_s=1\cdot 0+ 0 \cdot 1 +1\cdot 0+ 0 \cdot 1 + \sum_{j=1}^n (a_jd_{i,j}+b_jc_{i,j})=\langle x, r_i \rangle_s=0.$$
Also, $\langle r_i', r_1' \rangle_s=0$, as symplectic inner product is symmetric in $\mathbb{F}_2^{2n}$. Hence, the first row and the first column of the matrix  $G_1\Omega_{2n+4}G_1^T$ will be zero. Further, the symplectic inner product of other rows $r_i'$ and $r_j'$ for $2  \leq i,j \leq k+1$, i.e., any two rows from second row to the $(k+1)^{th}$ row of $G_1$ with each other is given by

$$ \langle r_i', r_j' \rangle_s=0 \cdot 0+ 0 \cdot 0 + \sum_{h=1}^n(c_{i,h}d_{j,h}+d_{i,h}c_{j,h})= \langle r_i, r_j \rangle_s.$$

Therefore, the matrix $G_1\Omega_{2n+4}G_1^T$ reduces to
\[
G_1\Omega_{2n+2}G_1^T=
\left(
\begin{array}{ccccccc}
0 &  &   &    \cdots \cdots & & 0 \\
 0 &   & & &   & & \\
 0 &   & & G \Omega_{2n}G^T&  & & \\
\vdots &  & & & & & \\
0 &  & & &   & &
\end{array}
\right)
.\]
This implies that $rank(G_1 \Omega_{2n+4} G_1^T)=rank(G \Omega_{2n} G^T)$. Since $C$ is free, by Theorem \ref{thm3}, we have $dim(SHull(C_{Res}))=rank(SHull(C))=l$.  Then, by Theorem \ref{Thm4}, $rank(G \Omega_{2n} G^T)=k-l$ and so $rank(G_1 \Omega_{2n+4} G_1^T)=k-l$. Again, Theorem \ref{Thm4} implies that the symplectic hull-dimension of $D_{Res}$ is given by $(k+1)-rank(G_1\Omega_{2n+4} G_1^T)=(k+1)-(k-l)=l+1$. Thus, we have $rank(SHull(D))=dim(SHull(D_{Res}))=l+1$.

\item[(b)] We have
\small\[
\pi(H') =H_1=
\left(
\setlength{\arraycolsep}{3pt}
\begin{array}{cccccccccccc}
1 & 0 & 0 & 0 & \cdots & 0 & 0 & 1 & 0 & 0& \cdots & 0 \\
0 & 1 & 0 & 0 & \cdots & 0 & 1 & 0 & 0 & 0& \cdots & 0 \\
0 &0 &  u_1 &  u_2 &   \cdots &  u_n & 0 &0 &  v_1 &  v_2 &   \cdots &  v_n \\
0 &0 & s_{1,1} &  s_{1,2} & \cdots &   s_{1,n} & 0 &0 &  t_{1,1} &  t_{1,2} & \cdots &   t_{1,n}\\
0 &0 & s_{2,1} &  s_{2,2} & \cdots &   s_{2,n} & 0 &0 &  t_{2,1} &  t_{2,2} & \cdots &   t_{2,n}\\
\vdots & \vdots & \vdots & \vdots&   & \vdots & \vdots & \vdots & \vdots & \vdots & & \vdots \\
0 &0 & s_{m,1} &  s_{m,2} & \cdots &   s_{m,n} & 0 &0 &  t_{m,1} &  t_{m,2} & \cdots &   t_{m,n}\\
\end{array}
\right)
.\]
We need to calculate the following symplectic inner products:
\begin{enumerate}[label=\textbullet]
    \item First row of $G_1$ with the first row of $H_1$:
    \begin{align*}
        \langle  (1,1,a|1,1,b), (1,0,\textbf{0}| 0,1,\textbf{0}) \rangle_s & =1 \cdot 0+1 \cdot 1+ \langle a,\textbf{0} \rangle \\
        &\hspace{0.4cm} + 1 \cdot 1+1 \cdot 0+ \langle b,\textbf{0} \rangle \\
        &=0.
    \end{align*}

     \item First row of $G_1$ with the second row of $H_1$:
    \begin{align*}
        \langle  (1,1,a|1,1,b), (0,1,\textbf{0}| 1,0,\textbf{0}) \rangle_s & =1 \cdot 1+1 \cdot 0+ \langle a,\textbf{0} \rangle \\
        &\hspace{0.4cm}+ 1 \cdot 0+1 \cdot 1+ \langle b,\textbf{0} \rangle \\
        &=0.
    \end{align*}

    \item First row of $G_1$ with the third row of $H_1$:
    \begin{align*}
        \langle  (1,1,a|1,1,b), (0,0, u| 0,0, v) \rangle_s & =1 \cdot 0+1 \cdot 0+ \langle a,v \rangle \\
        &\hspace{0.4cm}+ 1 \cdot 0+1 \cdot 0+ \langle b,u \rangle \\
        &= \langle (a|b), (u|v) \rangle_s\\
        &=0.
    \end{align*}

        \item First row of $G_1$ with the $i^{th}$ row of $H_1$ where $i>3$:
    \begin{align*}
        \langle  (1,1,a|1,1,b), (0,0, s_{i,j}| 0,0, t_{i,j}) \rangle_s & =1 \cdot 0+1 \cdot 0+ \langle a,t_{i,j} \rangle \\
        &\hspace{0.4cm}+ 1 \cdot 0+1 \cdot 0+ \langle b,s_{i,j} \rangle \\
        &= \langle (a|b), (s_{i,j}|t_{i,j}) \rangle_s\\
        &= \langle (a|b), s_i\rangle_s\\
        &=0.
    \end{align*}

    \item For $i>1$, the $i^{th}$ row of $G_1$ with the first row of $H_1$:
    \begin{align*}
        \langle   (0,0, c_{i,j}| 0,0, d_{i,j}), (1,0,\textbf{0}|0,1,\textbf{0}) \rangle_s & =0 \cdot 0+0 \cdot 1+ \langle c_{i,j}, \textbf{0} \rangle\\
        &\hspace{0.3cm}+ 0 \cdot 1+0 \cdot 0+ \langle d_{i,j}, \textbf{0} \rangle \\
        &=0.
    \end{align*}

        \item For $i>1$, the $i^{th}$ row of $G_1$ with the second row of $H_1$:
    \begin{align*}
        \langle   (0,0, c_{i,j}| 0,0, d_{i,j}), (0,1,\textbf{0}|1,0,\textbf{0}) \rangle_s & =0 \cdot 1+0 \cdot 0+ \langle c_{i,j}, \textbf{0} \rangle\\
        &\hspace{0.4cm}+ 0 \cdot 0+0 \cdot 1+ \langle d_{i,j}, \textbf{0} \rangle \\
        &=0.
    \end{align*}

    \item For $i>1$, the $i^{th}$ row of $G_1$ with the third row of $H_1$:
    \begin{align*}
        \langle   (0,0, c_{i,j}| 0,0, d_{i,j}), (0,0,u|0,0,v) \rangle_s & =0 \cdot 0+0 \cdot 0+ \langle c_{i,j}, v \rangle\\
        &\hspace{0.4cm}+ 0 \cdot 0+0 \cdot 0+ \langle d_{i,j}, u \rangle \\
        & = \langle (c_{i,j}| d_{i,j}), (u|v) \rangle_s \\
        &=0.~~~~~(\because~~(u|v) \in (C_{Res})^{\perp_S})
    \end{align*}

   \item  For $i>1$ and $j>3$, the $i^{th}$ row of $G_1$ with the $j^{th}$ row of $H_1$:
    \begin{align*}
        \langle   (0,0, c_{i,j}| 0,0, d_{i,j}), (0,0,s_{j,j'}|0,0,t_{j,j'}) \rangle_s & =0 \cdot 0+0 \cdot 0+ \langle c_{i,j}, t_{j,j'} \rangle\\
        &\hspace{0.4cm}+ 0 \cdot 0+0 \cdot 0+ \langle d_{i,j}, s_{j,j'} \rangle \\
        & = \langle (c_{i,j}| d_{i,j}), (s_{j,j'}| d_{j,j'}) \rangle_s \\
        &=0.
    \end{align*}
\end{enumerate}
Since all the above considered symplectic inner products are zero, we have $G_1 \Omega_{2n+4} H_1^T=0$. On the other hand, the first three rows of $H_1$ are independent of all its other rows. This implies that $rank(H_1)=3+(2n-k)$. Further,
$$rank(H_1)=3+(2n-k)=(2n+4)-(k+1)=dim((D_{Res})^{\perp_S}).$$
Therefore, $H_1$ is a parity-check matrix of $D_{Res}$. Thus, $H'$ is a parity-check matrix of $D$.
\end{enumerate}
\end{proof}

To support our build-up construction methods, we now construct some free $E$-linear codes with larger lengths and symplectic hull-ranks from a free $E$-linear code with a smaller length and symplectic hull-rank. \par
In the following example, a free $E$-linear [$6,2$]-code with symplectic hull-rank $2$ is used to construct free $E$-linear [$8,3$] and [$10,3$]-codes with symplectic hull-rank $3$.

\begin{example}
    Let $C$ be a free $E$-linear code generated by the matrix
    $$M=\begin{pmatrix}
        \kappa & 0 & 0 & 0 & 0 & \kappa \\
        0 & \kappa & 0 & 0 & 0 & \kappa
    \end{pmatrix}.$$
     We checked by MAGMA \cite{Magma} that the free $E$-linear code $C$ has the symplectic hull-rank $2$. The residue code $C_{Res}$ is generated by
    $$G=\begin{pmatrix}
        1 & 0 & 0 & 0 & 0 & 1 \\
        0 & 1 & 0 & 0 & 0 & 1
    \end{pmatrix}.$$
    If we choose $x=(1,1,0|0,0,1)$, then
 \begin{align*}
     \langle x,r_1 \rangle_s & = \langle (1,1,0|0,0,1), (1,0,0|0,0,1) \rangle_s \\
     &= \langle (1,1,0), (0,0,1) \rangle + \langle (0,0,1), (1,0,0) \rangle \\
     & =0,
 \end{align*}
 and
    \begin{align*}
     \langle x,r_2 \rangle_s & = \langle (1,1,0|0,0,1), (0,1,0|0,0,1) \rangle_s \\
     &= \langle (1,1,0), (0,0,1) \rangle + \langle (0,0,1), (0,1,0) \rangle \\
     & =0.
 \end{align*}
 Now, consider the free $E$-linear codes $D$ and $D'$ with respective generator matrices given by

$$G_1=\begin{pmatrix}
     \kappa & \kappa & \kappa & 0 & \kappa & 0 & 0 & \kappa \\
     0 &  \kappa & 0 & 0 &0 & 0 & 0 & \kappa \\
     0&   0 & \kappa & 0 & 0&0 & 0 & \kappa
\end{pmatrix}~~~\text{and}~~~G_2=\begin{pmatrix}
     \kappa & \kappa & \kappa & \kappa & 0 & \kappa & \kappa & 0 & 0 & \kappa \\
    0 & 0 &  \kappa & 0 & 0 &0 &0 & 0 & 0 & \kappa \\
    0& 0&   0 & \kappa & 0 & 0&0 & 0 & 0& \kappa
\end{pmatrix}.$$
 Then, by \textbf{Construction I}, $D$ is a free $E$-linear [$8,3$]-code with symplectic hull-rank $3$, and by \textbf{Construction II}, $D'$ is a free $E$-linear [$10,3$]-code with symplectic hull-rank $3$.
\end{example}

The next example constructs free $E$-linear [$10,4$] and [$12,4$]-codes with symplectic hull-rank $4$ using a free $E$-linear [$8,3$]-code  with symplectic hull-rank $3$.

\begin{example}
   Consider a free $E$-linear code $C$ generated by
    $$M=\begin{pmatrix}
        \kappa & 0 & 0 & \kappa & 0 & \kappa & \kappa & 0 \\
        0 & \kappa & 0 & \kappa & \kappa & \kappa & 0 & 0 \\
        0 & 0 & \kappa & \kappa & 0 & \kappa & 0 & \kappa
    \end{pmatrix}.$$
    We checked by MAGMA \cite{Magma} that the symplectic hull-rank of $C$ is $3$. Here, if we choose $x=(0,0,1,0| 0,1,1,1)$, then $\langle x, r_i \rangle_s=0$ for $i=1,2,3$. Now, let $D$ and $D'$ be free $E$-linear codes with respective generator matrices
 \setcounter{MaxMatrixCols}{12}
 $$G_1=\begin{pmatrix}
        \kappa & 0 & 0 & \kappa & 0 & \kappa & 0 & \kappa & \kappa & \kappa  \\
       0 & \kappa & 0 & 0 & \kappa & 0 & 0 & \kappa & \kappa & 0 \\
        0 & 0 & \kappa & 0 & \kappa & 0 &  \kappa & \kappa & 0 & 0 \\
        0 & 0 & 0 & \kappa & \kappa & 0 & 0 & \kappa & 0 & \kappa
    \end{pmatrix}$$
    and
    $$G_2=\begin{pmatrix}
     \kappa & \kappa & 0 & 0 & \kappa & 0 & \kappa & \kappa & 0 & \kappa & \kappa & \kappa \\
     0 & 0 & \kappa & 0 & 0 & \kappa & 0 & 0 & 0 & \kappa & \kappa & 0 \\
       0 & 0 & 0 & \kappa & 0 & \kappa & 0 & 0 &  \kappa & \kappa & 0 & 0 \\
       0 & 0 & 0 & 0 & \kappa & \kappa & 0 & 0  & 0 & \kappa & 0 & \kappa
    \end{pmatrix}.$$
Then, by \textbf{Construction I}, $D$ is a free $E$-linear [$10,4$]-code with symplectic hull-rank $4$, and by \textbf{Construction II}, $D'$ is a free $E$-linear [$12,4$]-code with symplectic hull-rank $4$.

\end{example}

\section{Hull-variation problem}
  This section investigates the permutation equivalence of free $E$-linear codes and classifies the optimal free $E$-linear codes with a fixed symplectic hull-rank. We also study the symplectic hull-variation problem for the free $E$-linear codes.\par

Let $\alpha$ be a permutation. We observe that $\langle \alpha(w), \alpha(z) \rangle=\langle w, z \rangle$ for all $w,z \in E^{2n}$. But this may not be true for the symplectic inner product, as shown in the following example.

\begin{example}
    Fix $n=2$, and let $w=(\kappa,0|0,0)$ and $z=(0,\kappa|0,0)$ be two vectors in $E^4$. Then, their symplectic inner product is given by
    \begin{align*}
    \langle w,z \rangle_s &=\langle (\kappa,0|0,0), (0,\kappa|0,0) \rangle_s \\
    &=\langle (\kappa,0),(0,0) \rangle+\langle (0,0), (0,\kappa) \rangle \\
    &=0.
    \end{align*}
    Now, let $\alpha=(2,3)$ be a permutation swapping the second and the third coordinate, fixing others. Then, we have
    $$\alpha(w)=(\kappa,0|0,0)~~~\text{and}~~~\alpha(z)=(0,0|\kappa,0).$$
    Hence, their symplectic inner product is given by
    \begin{align*}
        \langle \alpha(w), \alpha(z) \rangle_s &=\langle (\kappa,0|0,0), (0,0|\kappa,0) \rangle_s \\
        &=\langle (\kappa,0),(\kappa,0) \rangle+\langle (0,0), (0,0) \rangle \\
        &=\kappa.
    \end{align*}
    Thus, $\langle \alpha(w), \alpha(z) \rangle_s \neq\langle w, z \rangle_s$.
\end{example}

Now, let $C$ be a free $E$-linear [$2n,k$]-code and $\alpha$ be a permutation. It is clear that $\alpha(C^{\perp})=\alpha(C)^\perp$ where $C^\perp$ is the Euclidean dual of the code $C$. The following example shows that this need not be true for the symplectic dual.

\begin{example}
    Let $C$ be a free $E$-linear [$4,1$]-code generated by $G=\begin{pmatrix}
        \kappa & 0 & 0 & 0
    \end{pmatrix}$. Then its symplectic dual $C^{\perp_S}$ is generated by the matrix $$H=\begin{pmatrix}
        \kappa & 0 & 0 & 0 \\
         0& \kappa & 0 & 0 \\
        0 & 0 & 0 & \kappa
    \end{pmatrix}.$$
    Now, fix a permutation $\alpha=(2,3)$. Then the corresponding generator matrices of $\alpha(C)$ and $\alpha(C^{\perp_S})$ are

    $$\alpha(G)=\begin{pmatrix}
         \kappa & 0 & 0 & 0
    \end{pmatrix} ~~~~\text{and}~~~~\alpha(H)=\begin{pmatrix}
         \kappa & 0 & 0 & 0 \\
         0&  0 &\kappa & 0 \\
        0 & 0 & 0 & \kappa
    \end{pmatrix}.$$
    Further, $\alpha(C)=C$ implies that $H$ is a generator matrix of $(\alpha(C))^{\perp_S}$. We see that $(0,\kappa,0,0) \in (\alpha(C))^{\perp_S}$ but $(0,\kappa,0,0) \notin \alpha(C^{\perp_S})$. Thus, $\alpha(C^{\perp_S}) \neq (\alpha(C))^{\perp_S}$.
\end{example}

\begin{definition} For two codes $C$ and $D$, if $D=CP$ for a permutation matrix $P$, they are called permutation-equivalent.
 \end{definition}

The following result from \cite{Alah23} investigates the permutation equivalence of free $E$-codes.

\begin{theorem}( \cite{Alah23}, Theorem $16$)\label{Thm5}
  Let  $C$ and $D$  be two free $E$-linear codes. Then they are permutation-equivalent if and only if their residue codes are permutation-equivalent.
\end{theorem}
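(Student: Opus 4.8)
The plan is to reduce the statement to two structural facts about the alphabet $E$: that the reduction map $\pi$ commutes with coordinate permutations, and that a free code is completely recovered from its residue code via the splitting of Theorem \ref{Thm1c}. Both the map $\pi$ and the action of a permutation matrix $P$ are componentwise, so for every $z \in E^{2n}$ one has $\pi(zP)=\pi(z)P$; likewise, multiplication by a fixed scalar $e \in E$ (applied componentwise) commutes with $P$, so $P$ is in fact an $E$-module automorphism of $E^{2n}$. These two observations do all the work.

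First I would treat the forward direction, which needs no freeness hypothesis. Assuming $D=CP$ for a permutation matrix $P$, I compute $D_{Res}=\pi(D)=\pi(CP)=\{\pi(cP)\mid c\in C\}=\{\pi(c)P\mid c\in C\}=C_{Res}P$, so $C_{Res}$ and $D_{Res}$ are permutation-equivalent via the very same $P$.

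For the converse, assume $D_{Res}=C_{Res}P$. Since $C$ and $D$ are free, $C_{Res}=C_{Tor}$ and $D_{Res}=D_{Tor}$, so Theorem \ref{Thm1c} yields the internal direct-sum decompositions $C=\kappa C_{Res}\oplus\zeta C_{Res}$ and $D=\kappa D_{Res}\oplus\zeta D_{Res}$. Because $P$ commutes with the componentwise scalar action, $(\kappa C_{Res})P=\kappa(C_{Res}P)$ and $(\zeta C_{Res})P=\zeta(C_{Res}P)$, and since $P$ is a bijective additive map the image of a direct sum is again a direct sum. Hence $CP=\kappa(C_{Res}P)\oplus\zeta(C_{Res}P)=\kappa D_{Res}\oplus\zeta D_{Res}=D$, proving that $C$ and $D$ are permutation-equivalent.

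The only point requiring genuine care — the main (if mild) obstacle — is justifying that the direct-sum decomposition is preserved under $P$, which amounts to checking the $E$-linearity identity $(ev)P=e(vP)$ for $e\in E$ and $v\in E^{2n}$. This holds because a permutation matrix merely reindexes coordinates while scalar multiplication acts within each coordinate. Once this is established, the freeness hypothesis guarantees that each of $C$ and $D$ is fully determined by its residue code, so the equivalence transfers cleanly in both directions and the proof is complete.
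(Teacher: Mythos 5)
Your proof is correct. Note, however, that the paper itself offers no argument for this statement: it is imported verbatim as Theorem 16 of the cited reference \cite{Alah23}, so there is no in-paper proof to compare against. Your write-up therefore supplies a self-contained justification where the paper relies on a citation. The two pillars you identify are exactly the right ones: (i) both the reduction map $\pi$ and the scalar actions $v\mapsto \kappa v$, $v\mapsto \zeta v$ are componentwise, hence commute with right-multiplication by a permutation matrix $P$, which gives the forward direction $D_{Res}=\pi(CP)=\pi(C)P=C_{Res}P$ with no freeness needed; and (ii) for the converse, freeness collapses the decomposition of Theorem \ref{Thm1c} to $C=\kappa C_{Res}\oplus\zeta C_{Res}$, so that $CP=\kappa(C_{Res}P)\oplus\zeta(C_{Res}P)=\kappa D_{Res}\oplus\zeta D_{Res}=D$. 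Your worry about preservation of the direct sum is handled correctly: $P$ is an additive bijection commuting with the scalar actions, so it carries an internal direct sum to an internal direct sum. The only cosmetic caveat is that permutation equivalence of the residue codes a priori only provides \emph{some} permutation matrix $Q$ with $D_{Res}=C_{Res}Q$; your argument then produces $D=CQ$ with that same $Q$, which is exactly what is required, so nothing is lost.
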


We now investigate the symplectic hull-variation problem for the free $E$-linear codes.

\begin{theorem}
    Let $C$ and $D$ be two permutation-equivalent free $E$-linear codes of length $2n$. If $P$ is a permutation matrix such that $D=CP$, then their symplectic hull-ranks remain invariant if $P \Omega_{2n}P^T=\Omega_{2n}$.
\end{theorem}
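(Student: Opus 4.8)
The plan is to reduce everything to the residue codes, where the symplectic hull-dimension is governed by the explicit rank formula of Theorem \ref{Thm4}, and then to exploit the symplectic-preserving hypothesis $P\Omega_{2n}P^T=\Omega_{2n}$ to show that the relevant rank is unchanged by the permutation. Since $C$ and $D$ are free, Theorem \ref{thm3} gives $rank(SHull(C))=dim(SHull(C_{Res}))$ and $rank(SHull(D))=dim(SHull(D_{Res}))$, so it suffices to prove that the two residue codes have equal symplectic hull-dimensions over $\mathbb{F}_2$.

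First I would record that reduction modulo $J$ commutes with coordinate permutation: applying $\pi$ componentwise to $D=CP$ yields $D_{Res}=C_{Res}P$, where $P$ is now read as a permutation matrix acting on $\mathbb{F}_2^{2n}$. Consequently, if $G$ is a generator matrix of $C_{Res}$ of size $k\times 2n$ with $k=rank(C)=dim(C_{Res})$, then $GP$ is a generator matrix of $D_{Res}$; since permutation preserves dimension, $dim(D_{Res})=k$ as well.

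The key computation is then the conjugation identity for the obstruction matrix. Using $(GP)^T=P^TG^T$ and associativity,
\begin{align*}
(GP)\,\Omega_{2n}\,(GP)^T &= G\,P\,\Omega_{2n}\,P^T\,G^T \\
&= G\,(P\,\Omega_{2n}\,P^T)\,G^T \\
&= G\,\Omega_{2n}\,G^T,
\end{align*}
where the last equality is exactly the hypothesis $P\Omega_{2n}P^T=\Omega_{2n}$. Hence $rank\big((GP)\Omega_{2n}(GP)^T\big)=rank\big(G\Omega_{2n}G^T\big)$, and applying Theorem \ref{Thm4} to both residue codes gives $dim(SHull(D_{Res}))=k-rank\big((GP)\Omega_{2n}(GP)^T\big)=k-rank\big(G\Omega_{2n}G^T\big)=dim(SHull(C_{Res}))$. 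Combining with Theorem \ref{thm3} yields $rank(SHull(C))=rank(SHull(D))$.

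The only genuinely delicate point is the reduction step $D_{Res}=C_{Res}P$ together with the claim that $GP$ is a valid generator matrix of $D_{Res}$; everything downstream is the short matrix manipulation above. I would justify the reduction by noting that the map $\pi$ of reduction modulo $J$ is $\mathbb{F}_2$-linear and acts componentwise, so it commutes with the coordinate permutation encoded by $P$, and that a generator matrix of $C_{Res}$ maps to a spanning --- hence, after the dimension count, generating --- matrix of the permuted residue code. I do not expect any obstruction from the non-unitality of $E$ here, since once we pass to $C_{Res}$ and $D_{Res}$ we are working entirely with $\mathbb{F}_2$-linear codes, where Theorem \ref{Thm4} applies directly.
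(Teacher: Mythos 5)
Your proposal is correct and follows essentially the same route as the paper: reduce to the residue codes via Theorem \ref{thm3}, use $D_{Res}=C_{Res}P$ so that $GP$ generates $D_{Res}$, and apply the conjugation identity $(GP)\Omega_{2n}(GP)^T=G(P\Omega_{2n}P^T)G^T=G\Omega_{2n}G^T$ together with Theorem \ref{Thm4}. The only cosmetic difference is that you justify $D_{Res}=C_{Res}P$ directly from the componentwise nature of $\pi$, whereas the paper invokes Theorem \ref{Thm5}; both are fine.
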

\begin{proof}
Let $C$ and $D$ be two permutation-equivalent free $E$-linear codes and $D=CP$ for a permutation matrix $P$. Then, by Theorem \ref{Thm5}, their residue codes $C_{Res}$ and $D_{Res}$ are also permutation-equivalent and $D_{Res}=C_{Res}P$. Now, assume that $G_1$ and $G_2$ are generator matrices of $C_{Res}$ and $D_{Res}$, respectively. Without loss of generality, we may assume that $G_2=G_1P$. Then, we have
$$rank(G_2\Omega_{2n}G_2^T)=rank(G_1P\Omega_{2n}P^TG_1^T)=rank(G_1\Omega_{2n}G_1^T).$$
 Therefore, by Theorem \ref{Thm4}, the symplectic hull-dimensions of $C_{Res}$ and $D_{Res}$ are equal. Thus, by Theorem \ref{thm3}, the symplectic hull-ranks of $C$ and $D$ are equal. This completes the proof.

\end{proof}

For two permutation-equivalent free codes $C$ and $D$ over $E$, if $D=CP$ but $P \Omega_{2n}P^T \neq \Omega_{2n}$, then their symplectic hull-ranks may differ as shown in the example given below.

\begin{example}
    Let $C$ be a free $E$-linear code of length $4$ generated by the matrix
    $$G=\begin{pmatrix}
        \kappa & 0 & 0 & 0 \\
        0 & \kappa & 0 & 0
    \end{pmatrix}.$$
    Then, its symplectic dual $C^{\perp_S}$ is generated by
$$H=\begin{pmatrix}
        \kappa & 0 & 0 & 0 \\
        0 & \kappa & 0 & 0
    \end{pmatrix}.$$
    From the above, we conclude that $C=C^{\perp_S}$. Consequently, $SHull(C)=C$ and the symplectic hull-rank of $C$ is $2$. Choose a permutation matrix $P$ given by
    $$P=\begin{pmatrix}
        1 & 0 & 0 & 0 \\
        0 & 0 & 1 & 0 \\
        0 & 1 & 0 & 0 \\
        0 & 0 & 0 & 1
    \end{pmatrix}.$$
   Next, consider the free $E$-linear code $D$ given by $D=CP$. Then, a generator matrix of $D$ can be given by
    $$G'=\begin{pmatrix}
        \kappa & 0 & 0 & 0 \\
        0 & 0 & \kappa & 0
    \end{pmatrix}.$$
    It is easy to see that a generator matrix of $D^{\perp_S}$ can be given by

    $$H'=\begin{pmatrix}
        0 & \kappa & 0 & 0 \\
        0 & 0 & 0 & \kappa
    \end{pmatrix}.$$
    Clearly, $D \cap D^{\perp_S}=\{0\}$. Hence, the symplectic hull of the code $D$ is trivial, and so its symplectic hull-rank is zero. Thus, the symplectic hull-ranks of the permutation-equivalent codes $C$ and $D$ are distinct. Note that
  $$P\Omega_4P^T=\begin{pmatrix}
      0 & 1 & 0 & 0 \\
      1 & 0 & 0 & 0\\
      0 & 0 & 0 & 1\\
      0 & 0 &1 & 0
  \end{pmatrix} \neq \Omega_4.$$

\end{example}

In what follows, we utilize the notion of optimal codes in a way consistent with the approach in \cite{Li}.
\begin{definition}\label{defn2}
    A free [$2n,k$]-code $C$ over $E$ with symplectic hull-rank $l=i$ is said to be $l_i$-optimal if the minimum symplectic distance of the code $C$ is maximum among all the free $E$-linear [$2n,k$]-codes with symplectic hull-rank $i$.
\end{definition}

We now present the classification of optimal free $E$-linear codes with a prescribed symplectic hull-rank. First, we apply Theorem $1$ of \cite{Kushwaha2} to obtain all the free $E$-linear codes for lengths upto $4$. Then, Theorem \ref{Thm5} is used to determine all permutation-inequivalent free $E$-linear codes. We then compute the symplectic hull-rank of each code using Theorem \ref{thm3}. Finally, the optimality of these codes is assessed in accordance with Definition \ref{defn2} and the resulting optimal codes are listed below. All computations in this work are obtained using MAGMA \cite{Magma}.
\begin{enumerate}[label=\textbullet]
    \item \textbf{Inequivalent free $E$-linear $l_0$-optimal codes}\\\\
The generator matrices of the inequivalent free $E$-linear optimal codes with symplectic hull-rank $0$ are given below:\par
$ \begin{pmatrix}
    \kappa & 0 \\
    0 & \kappa
\end{pmatrix}$,
$\begin{pmatrix}
    \kappa & 0 & 0 & 0 \\
    0 & 0 & \kappa & 0
\end{pmatrix}$,
$\begin{pmatrix}
    \kappa & 0 & 0 & \kappa \\
    0 & \kappa & 0 & \kappa
\end{pmatrix}$,
$\begin{pmatrix}
    \kappa & 0 & 0 & \kappa \\
    0 & \kappa & 0 & 0
\end{pmatrix}$,
$\begin{pmatrix}
    \kappa & 0 & \kappa & \kappa \\
    0 & \kappa & 0 & \kappa
\end{pmatrix}$,
$\begin{pmatrix}
    \kappa & 0 & \kappa & \kappa \\
    0 & \kappa & 0 & 0
\end{pmatrix}$, and
$\begin{pmatrix}
    \kappa & 0 & 0 & 0 \\
    0 & \kappa & 0 & 0 \\
    0 & 0 & \kappa & 0 \\
    0 & 0 & 0 & \kappa
\end{pmatrix}$.\\\\

\item \textbf{Inequivalent free $E$-linear $l_1$-optimal codes} \\\\
Inequivalent free $E$-linear optimal codes with symplectic hull-rank $1$ are given below:\par
 $\begin{pmatrix}
     \kappa & 0
 \end{pmatrix}$,
$\begin{pmatrix}
     \kappa & \kappa
 \end{pmatrix}$,
 $\begin{pmatrix}
     \kappa & 0 & \kappa & \kappa
 \end{pmatrix}$,
 $\begin{pmatrix}
     \kappa & 0 & 0 & \kappa
 \end{pmatrix}$,
 $\begin{pmatrix}
     \kappa & \kappa & \kappa & \kappa
 \end{pmatrix}$,
 $\begin{pmatrix}
     \kappa & \kappa & 0 & 0 \\
     0 & 0 & \kappa & 0 \\
     0 & 0 & 0 & \kappa
 \end{pmatrix}$,
 $\begin{pmatrix}
     \kappa & 0 & 0 & 0 \\
     0 & \kappa & 0 & 0 \\
     0 & 0 & 0 & \kappa
 \end{pmatrix}$,
  $\begin{pmatrix}
     \kappa & 0 & 0 & \kappa \\
     0 & \kappa & 0 & 0 \\
     0 & 0 & \kappa & \kappa
 \end{pmatrix}$, and
 $\begin{pmatrix}
     \kappa & 0 & 0 & \kappa \\
     0 & \kappa & 0 & \kappa \\
     0 & 0 & \kappa & \kappa
 \end{pmatrix}$. \\\\

\item \textbf{Inequivalent free $E$-linear $l_2$-optimal codes}\\\\
The free $E$-linear codes with generator matrices
 $\begin{pmatrix}
     \kappa & 0 & 0 & \kappa \\
     0 & \kappa & \kappa & \kappa
 \end{pmatrix}$ and $\begin{pmatrix}
      \kappa & 0 & 0 & \kappa \\
     0 & \kappa & \kappa & 0
 \end{pmatrix}$ are the only inequivalent free $E$-linear optimal codes with symplectic hull-rank $2$.

\end{enumerate}
\section{Conclusion}
In this manuscript, we have investigated the three symplectic hulls of an $E$-linear code. Initially, we have identified the residue and torsion codes of the three symplectic hulls and determined the generator matrix of the two-sided symplectic hull of a free $E$-linear code. Additionally, the symplectic hull-rank of a free $E$-linear code is calculated. Then, we studied the symplectic hull of the sum of the two free $E$-linear codes and also produced the condition for the sum of the two free $E$-linear codes to be a symplectic LCD code. Subsequently, we proposed two build-up construction techniques that extend a free $E$-linear code with a smaller length and symplectic hull-rank to one of larger length and symplectic hull-rank. We supported our build-up methods through concrete examples of code. Furthermore, we discussed the permutation equivalence and then investigated the symplectic hull-variation problem for the free $E$-linear codes. Finally, the classification of free $E$-linear optimal codes has presented for lengths up to $4$. A potential direction for future research can be the extension of the present study to other non-unital rings appearing in Fine’s classification \cite{Fine93}.

\section*{Acknowledgement}
The first author sincerely acknowledges the financial support provided by the Council of Scientific \& Industrial Research, Govt. of India (under grant No. 09/1023(16098)/2022-EMR-I).
%Also, the authors would like to thank the anonymous referee(s) and the Editor of this journal for their valuable comments to improve the presentation of the paper.
\section*{Declarations}
\textbf{Competing interests}: There is no competing interest among authors associated with this work.\\
\textbf{Data Availability}: All data used to derive the results of this study are included in the manuscript. If necessary, additional information can be obtained from the corresponding author. \\
\textbf{Use of AI tools}: Artificial Intelligence (AI) tools were not used in writing or preparation of this work.

\end{document}